\documentclass[a4paper]{scrartcl}
 
\usepackage{authblk}
\usepackage{stmaryrd}
\usepackage{amssymb, amsmath, amsthm}
\usepackage{mathtools,lmodern}
\usepackage{hyperref}
\usepackage{color}
\usepackage{changepage} 

\usepackage{paralist} 
\usepackage{tikz}

\newtheorem{theorem}{Theorem}
\newtheorem{definition}{Definition}
\newtheorem{example}{Example}
\newtheorem{corollary}{Corollary}
\newtheorem{lemma}{Lemma}
\newtheorem{remark}{Remark}

\newcommand{\ignore}[1]{}
\definecolor{maroon}{RGB}{128,0,0}
\definecolor{lava}{RGB}{207, 16, 32}

\definecolor{forestgreen}{RGB}{34,139,34}

\definecolor{midnightblue}{RGB}{25,25,112}
\definecolor{hanblue}{RGB}{82, 24, 250}

\definecolor{indigo}{RGB}{75,0,130}
\definecolor{hanpurple}{RGB}{82, 24, 250}

\definecolor{teal}{RGB}{0, 128, 128}

\definecolor{rawumber}{RGB}{130, 102, 68}

\newcommand{\tcr}[1]{\textcolor{lava}{#1}}

\newcommand{\tcg}[1]{\textcolor{forestgreen}{#1}}

\newcommand{\tcb}[1]{\textcolor{blue}{#1}}

\newcommand{\bl}[2]{#1^{(#2)}}

\newcommand{\no}{\sqsubset}

\newcommand{\noq}{\sqsubseteq}
\newcommand{\noqr}{\sqsupseteq}
\newcommand{\lcp}{\sqcap}
\newcommand{\biglcp}{\bigsqcap}
\newcommand{\ry}{\reflectbox{\ensuremath{y}}}

\newcommand{\bx}{\bar{x}}
\newcommand{\by}{\bar{y}}
\newcommand{\bz}{\bar{z}}

\newcommand{\px}{(x,\bx)}
\newcommand{\py}{(y,\by)}
\newcommand{\pz}{(z,\bz)}

\newcommand{\hq}{\hat{q}}

\newcommand{\dq}{\dot{q}}

\newcommand{\ie}{\quad\text{ i.e. }\quad}

\newcommand{\al}{\Sigma}
\newcommand{\als}{\al^\ast}
\newcommand{\ct}{\mathsf{C}_{\al}}

\newcommand{\ew}{\varepsilon}

\newcommand{\prf}{\textsf{prf}}

\newcommand{\abs}[1]{\left|#1\right|}

\newcommand{\N}{\mathbb{N}}

\bibliographystyle{plain}

\title{Computing the longest common prefix of a context-free language in polynomial time}
\author{Michael Luttenberger}
\author{Raphaela Palenta}
\author{Helmut Seidl}
\affil{TU M\"unchen, Germany,
  \texttt{\{luttenbe,palenta,seidl\}@in.tum.de}}

\newcommand{\txtLCP}{\textsf{lcp}\ }

\begin{document}

\maketitle

\begin{abstract}
We present two structural results concerning the longest common prefixes of non-empty languages.
First, we show that the longest common prefix of the language generated by a context-free grammar of size $N$
equals the longest common prefix of the same grammar where the heights of the derivation trees are bounded by
$4N$.
Second, we show that each non-empty language $L$ has a representative subset of at most three elements which behaves 
like $L$ w.r.t.\ the longest common prefix as well as w.r.t.\ longest common prefixes of $L$ after unions or
concatenations with arbitrary other languages.
From that, we conclude 
that the longest common prefix, and thus the longest common suffix, of a context-free language can be computed in polynomial time.
 \end{abstract}

\section{Introduction}
Let $\al$ denote an alphabet. On the set $\als$ of all words over $\al$, the prefix relation provides us with a partial ordering
$\sqsubseteq$ defined by $u\sqsubseteq v$ iff $u u' = v$ for some $u'\in\als$.
The {\em longest common prefix} (\txtLCP for short) of a non-empty set $L\subseteq\als$ then is given by the greatest lower bound $\biglcp L$ of $L$
w.r.t.\ this ordering. 
For two words $u,v\in \als$, we also denote this greatest lower bound as $u\lcp v$.
\ignore{
 denote their longest common prefix, i.e.\ the longest word of $\prf(u)\cap \prf(v)$.
The longest prefix common to all words in a non-empty language $L\subseq\als$ is given by 
$\biglcp L = \biglcp_{u\in L} u$ of $L$, i.e.\ the longest word of $\bigcap_{u\in L} \prf(u)$.
}
Our goal is to compute the \txtLCP when the language $L$ is context-free, i.e., generated by
a context-free grammar (CFG) --- we therefore assume wlog.\ that $\al$ contains at least two letters.

The computation of the \txtLCP 
(sometimes also {\em maximum common prefix}) is well studied for finite languages, in particular in the setting of string matching based on suffix arrays (e.g., \cite{KasaiLAAP01}) where the string is given explicitly. Very often, 
strings can be efficiently compressed using {\em straight-line programs (SLPs)} --- essentially CFGs which produce exactly one word. Interestingly, many of the standard string operations can still be done efficiently also on SLP-compressed strings (see, e.g., \cite{Lohrey2012}). As the union of SLPs is a (acyclic) CFG, the question of computing the lcp of a context-free language naturally arises. 
CFGs also represent a popular formalism to specify sets of well-formed words.
Assume that we are given a CFG for the legal outputs of a program.
This CFG might be derived from the specification as well as from an abstract interpretation of the program.
Then the \txtLCP of this language represents a prefix which can be output already,
before the program actually has been run.
This kind of information is crucial for the construction of \emph{normal forms}, e.g., of 
string producing processors such as
linear tree-to-string transducers \cite{Boiret2016,Laurence2011}. For these devices, the normal forms 
have further interesting applications as they allow for simple algorithms to decide equivalence \cite{Palenta2016}
and enable efficient learning \cite{Laurence2014}.

Obviously, the \txtLCP of the context-free language $L$ is a prefix of the shortest word in $L$. 
Since the shortest word of a context-free language can be effectively computed, 
the \txtLCP of $L$ is also effectively computable. 
The shortest word generated from a context-free grammar $G$, however, may be of length
exponential in the size of $G$. Therefore, it is an intriguing question whether or not the
\txtLCP can be efficiently computed.
Here, we show that the longest common prefix can in fact be computed in polynomial time.
As the words the algorithm computes with may be of exponential length, we have to resort
to \emph{compressed} representations of long words by means of SLPs 
\cite{Plandowski}.
We will rely on algorithms for basic computational problems for SLPs as presented, 
e.g., in \cite{Lohrey2012}.

Our method of computing $\biglcp L$ is based on two structural results.
First we show in 
Section~\ref{sec_height}
that it suffices to consider the finite sublanguage 
of $L$ consisting of those words, for which there is a derivation tree of height at most $4N$ ---
with $N$ the number of nonterminals for a CFG of $L$.\footnote{To simplify the presentation we assume that the CFG is proper, i.e.\ we will rule out production rules of the form $A\to B$ and $A\to \ew$ (with $A,B$ nonterminals and $\ew$ the empty word).}
This implies that (1) in the proof of our main result we can replace the grammar by an \emph{acyclic} context-free grammar, and (2) the actual fixpoint iteration to compute the \txtLCP will converge within at most $4N$ iterations.
Second we show in Section~\ref{sec_rep} that for every non-empty language $L$ there is a subset $L'\subseteq L$ 
of at most three elements which is \emph{equivalent} to $L$ w.r.t.\ the \txtLCP after 
arbitrary concatenations with other words.
This means that for every word $w$, the language $L'w$ has the same \txtLCP as $Lw$. 

We illustrate both results by examples.
\iftrue
For the first result, i.e.\ the restriction to derivation trees of bounded height, consider the language 
\[
L:=\{a^2b(\tcr{a^2b})^i\tcb{a^2b}(\tcg{a^2ba})^ia^2ba^2ba^3\mid i \in \N_0\}
\] 
generated by the context-free grammar consisting of the following rules over the alphabet $\al=\{a,b,c\}$ and the six nonterminals $\{S,X,A_2,A_1,X_2,X_1\}$:
\[
\begin{array}{llllll}
S\to X_2A_2bA_2bA_2a & A_2 \to aA_1 & A_{1} \to a & X\to A_2b \\
 & X_2\to aX_1 & X_{1} \to abX & X\to X_2 A_2 ba
\end{array}
\]
It is easy to check that here the \txtLCP is already determined by repeating the derivation of $X$ to $\tcb{aab}X\tcb{aaba}$ at most two times, which corresponds to the sublanguage consisting of all words which have a derivation tree of height at most $9$.
\[
\begin{array}{lccll}
\biglcp L
& = &      & aab\tcb{aab}aabaabaa\, a & (i=0)\\
&   & \lcp & aab\tcr{aab}\tcb{aab}\tcg{aaba}a\, abaabaaa & (i=1)\\
&   & \lcp & aab\tcr{aabaab}\tcb{aab}\tcg{aa\, ba}\tcg{aaba}aabaabaaa & (i=2)\\
&   & \lcp & aab\tcr{aabaabaab}\tcb{aa\, b}\tcg{aabaaabaaaba}aabaabaaa & (i= 3)\\
&   & \lcp & aab\tcr{aabaabaabaa\, b}\ldots & (i\ge 4)\\
& = &      & aabaabaabaabaa\\
\end{array}
\]
We remark that the bound of $4N$, i.e.\ $24$ for this example, on the height resp.\ the number of iterations needed to converge is a crude overapproximation based on the pigeon-hole principle which does not take into account the structure of the grammar. The actual computation of the \txtLCP may thus terminate much earlier, in particular when taking the dependency of nonterminals into account as done in Example~\ref{ex:alg}.
\else
For the first result, i.e.\ the restriction to derivation trees of finite height, fix any $k\in\N_0$ and consider the language $L:=\{a^{2^k}b(\tcr{a^{2^k}b})^i\tcg{a^{2^k}b}(\tcb{a^{2^k}ba})^ia^{2^k}ba^{2^k}ba^{2^k}a\mid i \in \N_0\}$ generated by the context-free grammar consisting of the following rules over the alphabet $\al=\{a,b,c\}$:
\[
\begin{array}{c}
S\to X_kA_kbA_kbA_ka \quad X\to X_kA_kba \mid A_kb \\
A_k \to aA_{k-1}\quad \ldots \quad A_{1} \to a\\
X_k \to aX_{k-1}\quad \ldots \quad X_{1} \to abX \\
\end{array}
\]
Enumerating the words of $L$ for increasing $i$ 
\[
\begin{array}{l|l}
i & a^{2^k}b(\tcr{a^{2^k}b})^i \tcg{a^{2^k}b}\tcb{a^{2^k}ba})^ia^{2^k}ba^{2^k}ba^{2^k}a\\
\hline
0 & a^{2^k}b\tcg{a^{2^k}b}a^{2^k}ba^{2^k}ba^{2^k}a\\
1 & a^{2^k}b\tcr{a^{2^k}b}\tcg{a^{2^k}b}\tcb{a^{2^k}ba}{a^{2^k}ba^{2^k}ba^{2^k}a}\\
2 & a^{2^k}b\tcr{a^{2^k}b}\tcr{a^{2^k}b}\tcg{a^{2^k}b}\tcb{a^{2^k}ba}\tcb{a^{2^k}ba}{a^{2^k}ba^{2^k}ba^{2^k}a}\\
3 & a^{2^k}b\tcr{a^{2^k}b}\tcr{a^{2^k}b}\tcr{a^{2^k}b}\tcg{a^{2^k}b}\tcb{a^{2^k}ba}\tcb{a^{2^k}ba}\tcb{a^{2^k}ba}{a^{2^k}ba^{2^k}ba^{2^k}a}\\
\vdots & \vdots\\
\end{array}
\]
we obtain that 
\[
\begin{array}{lcl}
\biglcp L & = & a^{2^k}b\tcg{a^{2^k}b}a^{2^k} a^{2^k}ba^{2^k}ba^{2^k} \\
& = & a^{2^k}b\tcg{a^{2^k}b}a^{2^k}ba^{2^k}ba^{2^k}a \lcp a^{2^k}b\tcr{a^{2^k}b}\tcr{a^{2^k}b}\tcg{a^{2^k}b}\tcb{a^{2^k}ba}\tcb{a^{2^k}ba}{a^{2^k}ba^{2^k}ba^{2^k}a}\\
& = & \biglcp \{a^{2^k}b(\tcr{a^{2^k}b})^i\tcg{a^{2^k}b}(\tcb{a^{2^k}ba})^ia^{2^k}ba^{2^k}ba^{2^k}a \mid 0 \le i \le 2\}
\end{array}
\]
By construction, the $i$-th word of $L$ has a derivation tree of height $(i+2)(k+1)+1$ (excluding the leaves representing the letters).
\tcr{Raus oder noch genauer, um die Schranke von $4N=4(2+3(k+1))=8+12(k+1)$ besser zu motivieren?}
Note that the grammar in this example is essentially a linear grammar extended with SLPs in order to encode the exponentially long words $a^{2^k}$. We show in Section~\ref{sec_height} that in order to bound the height of the derivation trees it indeed suffices to consider essentially linear languages.
\fi

In order to compute the \txtLCP recursively, we call two languages $L_1,L_2\subseteq \Sigma^\ast$ {\em equivalent w.r.t.\ the \txtLCP} 
if  for all words $w \in \Sigma^*$ we have that $\biglcp(L_1w) = \biglcp(L_2w)$.
In Section~\ref{sec_rep} we show that every language $L$ can be reduced to a sublanguage $L'$ consisting of at most three words so that $L$ and $L'$ are equivalent w.r.t.\ the \txtLCP.
In fact, this result can be motivated by considering the special case of a language of the form $L=\{u,uv_1\}$ (with $u,v_1\in\als$) where we have $\biglcp (Lw) = u(w\lcp v_1^\omega)$ for any $w\in\als$ (see also Section~\ref{sec_rep}). From this observation one immediately obtains that for finite languages $L'=\{uv_1,uv_2,\ldots,uv_k\}$ we have $\biglcp(L'w) = u(w\lcp v_1^\omega\lcp v_2^\omega\lcp \ldots \lcp v_k^\omega)$ and that one only needs to keep those two $uv_i,uv_j$ for which $v_i^\omega\lcp v_j^\omega$ is minimal. The result then extends to arbitrary languages.
E.g., in case of the language $L = a (ba)^*$ we only need the sublanguage $\{a,aba\}$ (with $\ew^\omega \lcp (ba)^\omega := (ba)^\omega$) as the words $a$ and $aba$ suffice to characterize both $\biglcp L=a$ and the period $ba$ that generates all suffices. For comparison, in case of $L = abab+aba(ba)^\ast$ the \txtLCP is $aba$, which can only be extended to at most $abab = aba(b^\omega \lcp (ba)^\omega)$. We therefore need to remember $\{aba,abab,ababa\}$: 
the sublanguages $\{aba,abab\}$ resp.\ $\{aba,ababa\}$ preserve $\biglcp L = aba$ but can be extended by $b^\omega$ resp.\ $(ba)^\omega$; whereas $\{abab,ababa\}$ only captures the maximal extension of $\biglcp L$, but does not preserve $\biglcp L$ itself.

In order to compute the \txtLCP of a given context-free language $L$ we then (implicitly) unfold the given context-free
grammar into an acyclic grammar, and compute for every nonterminal of the unfolded grammar an equivalent sublanguage
of at most three words, each compressed by means of a SLP, instead of the actual language. From this finite representation of $L$ we then can easily obtain its \txtLCP.
Altogether, we arrive at a polynomial time algorithm.

\section{Preliminaries}

$\al$ denotes a (finite) alphabet. We assume that $\al$ contains at least two letters as any context-free language over a unary alphabet is regular. $\als$ is the set of all finite words over $\al$ with $\ew$ the empty word, $\al^\omega$ the set of all (countably) infinite words over $\al$. We use ($\omega$-)rational expressions to denote words and languages, e.g.\ $w^\ast = \ew + w + ww + \ldots = \sum_{i\in\N_0} w^i$ and $w^\omega = wwwwwwwwwww\ldots$.

By $\ct = \{(u,v)\in \als\times\als\}$ we denote the set of all pairs of finite words over $\al$. We define a multiplication on $\ct$ by $(x,\bx)(y,\by):=(xy,\by\bx)$. For $(x,\bx)\in\ct$ and $w\in \als$ set $(x,\bx)w= xw\bx$. As in the case of words, we set $(x,\bx)^0:=(\ew,\ew)$, $(x,\bx)^{k+1}:=  (x,\bx)(x,\bx)^k$ and $(x,\bx)^\ast := \sum_{k\ge 0} (x,\bx)^k$ for all $x,\bx\in\als$ and $k\in\N_0$.

Note that we slightly deviate from standard notation when it comes to the prefix order (i.e.\ $u<w$) and the common prefix (i.e.\ $u\wedge v$) of two words in order to avoid the clash with the notation for conjunction ($\wedge$):
For $u,v\in \als$ we write $u\noq v$ ($u\no v$) to denote that $u$ is a (strict) prefix of $v$, i.e.\ $v=uw$ for some $w\in\als$ ($w\in\al^+$). For $L\subseteq \als$ (with $L\neq \emptyset$) its {\em longest common prefix} (\txtLCP) $\biglcp L$ is given by the greatest lower bound of $L$
w.r.t.\ this ordering.
We simply write $u\lcp v$ for $\biglcp\{u,v\}$. Note that for any word $w\in L$ there is at least one word $\alpha\in L$ s.t.\ $\biglcp L = w\lcp \alpha$; we call any such $\alpha$ a {\em witness (w.r.t.\ $w$)}. Note that $\lcp$ is commutative and associative;  concatenation distributes from the left over the \txtLCP (i.e.\ $u(v\lcp w) = uv\lcp uw$); and the \txtLCP is monotonically decreasing on the union of languages, i.e.\ $\biglcp (L\cup L') = (\biglcp L)\lcp (\biglcp L')$. The \txtLCP of infinite words is defined analogously.

A word $p\in \als$ is called a power of a word $q$ if $p \in q^\ast$; then $q$ is called a root of $p$; if $p\neq \ew$ is its own shortest root, $p$ it is called {\em primitive}. Two words $u,v$ are {\em conjugates} if the is a factorization $u=pq$ and $v=qp$. We recall two well-known results:

\begin{lemma}[Commutative Words, \cite{Choffrut1997}]\label{lem:com}
Let $u,v\in\als$ be two words.
If $uv=vu$, then $u,v\in p^\ast$ for some primitive $p\in \als$.
\end{lemma}

\begin{lemma}[Periodicity Lemma of Fine and Wilf, \cite{FineWilf1965}]\label{lem:fw}
Let $u,v\in\al^+$ be two non-empty words. 
If $\abs{u^\omega \lcp v^\omega}\ge \abs{u}+\abs{v}-\gcd(\abs{u},\abs{v})$, then $uv=vu$.
\end{lemma}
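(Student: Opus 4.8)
The plan is to reduce the statement to the classical combinatorial fact that a word long enough to carry two periods $p$ and $q$ must also carry the period $\gcd(p,q)$, and then to read off commutativity.

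Write $p=\abs{u}$, $q=\abs{v}$ and $d=\gcd(p,q)$; since both hypothesis and conclusion are symmetric in $u$ and $v$, I assume $p\le q$, so that $d\le p\le q$. Put $w:=u^\omega\lcp v^\omega$ and recall $\abs{w}\ge p+q-d$. Since $w\noq u^\omega$, the word $w$ is $p$-periodic, i.e.\ $w[i]=w[i+p]$ whenever $1\le i$ and $i+p\le\abs{w}$; likewise $w\noq v^\omega$ makes $w$ be $q$-periodic. I claim it suffices to show that $w$ is also $d$-periodic. Indeed, let $r$ be the prefix of $w$ of length $d$; then $d$-periodicity gives $w\noq r^\omega$. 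As $\abs{w}\ge p+q-d\ge q\ge p$, the prefixes of $w$ of lengths $p$ and $q$ are exactly $u$ and $v$ (being the length-$p$ resp.\ length-$q$ prefixes of $u^\omega$ resp.\ $v^\omega$), and since $d\mid p$ and $d\mid q$ we obtain $u=r^{p/d}$ and $v=r^{q/d}$. Hence $uv=r^{(p+q)/d}=vu$.

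It remains to prove the crux: a word $w$ with periods $p$ and $q$ and $\abs{w}\ge p+q-d$ has period $d$. I would argue by reachability on positions. Consider the graph on $\{1,\dots,\abs{w}\}$ joining $i$ with $i+p$, and $i$ with $i+q$, whenever both endpoints are in range; every edge joins equal letters of $w$, so letters are constant along connected components. By B\'ezout there are $a,b\ge 1$ with $ap-bq=d$, chosen minimal so that $a<q/d$ and $b<p/d$. Starting at a position $i$ and performing the $a$ steps $+p$ and the $b$ steps $-q$ in the greedy order (move $+p$ while the position stays $\le\abs{w}$, and move $-q$ otherwise) reaches $i+d$ while every visited position lies in a window spanning at most $p+q-d$ positions. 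The bound $\abs{w}\ge p+q-d$ is exactly what guarantees that this window fits inside $\{1,\dots,\abs{w}\}$, so the walk never leaves range and $w[i]=w[i+d]$ follows for all admissible $i$. (Equivalently, one can run a Euclidean induction replacing the pair $(p,q)$ by $(p,q-p)$, which has the same $\gcd$ and a strictly smaller maximum.)

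The main obstacle is precisely this range-control in the crux. The two-step deductions $w[i]=w[i+q]=w[i+q-p]$ cover most positions immediately, but a band of up to $d$ boundary positions is not reachable by a single such step and requires the full greedy walk; and it is only the sharp length bound $p+q-d$ — rather than the easier $p+q$ — that keeps that walk inside $w$. Verifying that the greedy ordering indeed stays within a window of width $p+q-d$ is the one genuinely careful estimate, and it is where the tightness of Fine and Wilf's hypothesis is used; everything else is bookkeeping.
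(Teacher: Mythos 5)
The paper itself gives no proof of this lemma: it is quoted as the classical periodicity theorem and delegated to the citation \cite{FineWilf1965}, so your argument has to stand on its own as a proof of that classical result. Your reduction step is correct and cleanly done: with $p=\abs{u}$, $q=\abs{v}$, $d=\gcd(p,q)$ and $w=u^\omega\lcp v^\omega$, the word $w$ inherits the periods $p$ and $q$, and if $w$ also has period $d$ then, since $\abs{w}\ge p+q-d\ge q\ge p$ and $d\mid p$, $d\mid q$, the length-$p$ and length-$q$ prefixes of $w$ are $u=r^{p/d}$ and $v=r^{q/d}$ for the length-$d$ prefix $r$ of $w$, whence $uv=r^{(p+q)/d}=vu$.

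The crux, however --- two periods plus length $\ge p+q-d$ imply period $d$ --- is exactly the Fine--Wilf theorem, and the walk argument you offer for it is not just unverified but false as stated. Take $p=2$, $q=3$, $d=1$, $\abs{w}=4=p+q-d$, and $i=3$, so that $w[3]=w[4]$ must be derived. B\'ezout gives $a=2$, $b=1$ (indeed $2\cdot 2-1\cdot 3=1$), and your walk may only use steps $+2$ (legal from positions $\le 2$) and $-3$ (legal from positions $\ge 4$). From position $3$ both moves leave $\{1,\dots,4\}$, so the greedy walk is stuck at its very first step; in fact no ordering of two $+2$-steps and one $-3$-step starting at $3$ stays in range. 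Position $4$ is reachable from $3$ only via the \emph{complementary} steps $-p,+q$ (namely $3\to 1\to 4$), which your walk does not use; the same happens at $i=1$, where only $1\to 4\to 2$ (steps $+q,-p$) works while every $(+p,-q)$-walk dies. So the assertion that the greedy $(+p,-q)$-walk ``never leaves range'' under the tight bound $p+q-d$ is wrong --- it holds only under the easier bound $p+q$ --- and repairing it (choosing, per starting position, between the $(+p,-q)$- and $(+q,-p)$-walks and proving that one of them always stays in range) is essentially the entire content of the theorem, i.e.\ precisely the step you defer as ``the one genuinely careful estimate.'' (A smaller issue: the normalization $1\le b<p/d$ is impossible when $d=p$; that case is trivial but must be split off.) The parenthetical Euclidean induction on $(p,q)\mapsto(p,q-p)$ is the standard correct route, but it is likewise only named, not carried out: one must check that the prefix of length $\abs{w}-p$ inherits both periods $p$ and $q-p$ while still meeting the bound $p+(q-p)-d$, and that $d$-periodicity of this prefix propagates back to all of $w$ through the period $p$. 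As written, the proposal establishes the reduction but leaves the theorem itself unproven.
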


Combining these two lemmata yields the following result which is a useful tool in the proofs to follow (see also lemma 3.1 in \cite{Choffrut1997} for a more general version of this result):
\begin{corollary}\label{cor:fw}
Let $u,v\in\als$ with $uv\neq vu$. 

Then $u^\omega\lcp v^\omega = uv\lcp vu$ with $\abs{uv\lcp vu}< \abs{u}+\abs{v}-\gcd(\abs{u},\abs{v})$.
\end{corollary}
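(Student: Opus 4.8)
The plan is to set $p := u^\omega\lcp v^\omega$ and prove the two assertions separately, reducing almost everything to elementary bookkeeping exploiting that $p$ is simultaneously a prefix of the two \emph{periodic} words $u^\omega$ and $v^\omega$. I write $s[i]$ for the $i$-th letter of a word $s$. First observe that $uv\neq vu$ forces both $u$ and $v$ to be non-empty, since an empty factor would make $uv=vu$ trivially; hence $u,v\in\al^+$ and Lemma~\ref{lem:fw} applies. Its contrapositive yields at once $\abs{p}<\abs{u}+\abs{v}-\gcd(\abs{u},\abs{v})$, and in particular $\abs{p}<\abs{u}+\abs{v}$. Thus, once the equality $p=uv\lcp vu$ is established, the claimed length bound on $\abs{uv\lcp vu}$ is immediate.

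It remains to show $p=uv\lcp vu$, and I would first prove $p\noq uv$ and $p\noq vu$. For $p\noq uv$: for positions $i\le\abs{u}$ we have $uv[i]=u[i]=u^\omega[i]=p[i]$; for $\abs{u}<i\le\abs{p}$ the bound $\abs{p}<\abs{u}+\abs{v}$ guarantees $1\le i-\abs{u}<\abs{v}$, so $uv[i]=v[i-\abs{u}]=v^\omega[i-\abs{u}]=p[i-\abs{u}]$ using $p\noq v^\omega$, while the $\abs{u}$-periodicity of $u^\omega$ together with $p\noq u^\omega$ gives $p[i]=u^\omega[i]=u^\omega[i-\abs{u}]=p[i-\abs{u}]$; hence $uv[i]=p[i]$. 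The symmetric computation, interchanging the roles of $u$ and $v$, shows $p\noq vu$, and therefore $p\noq uv\lcp vu$.

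For the reverse it suffices to show $\abs{uv\lcp vu}\le\abs{p}$, since $p\noq uv\lcp vu$ already gives $\abs{p}\le\abs{uv\lcp vu}$ and equal-length prefixes coincide. Suppose not; then at the position $d:=\abs{p}+1$, which satisfies $d\le\abs{uv\lcp vu}<\abs{u}+\abs{v}$ (as $uv\neq vu$ have equal length and must disagree somewhere), we have $uv[d]=vu[d]$. Running the periodicity computation of the previous paragraph at the single index $d$ shows $uv[d]=u^\omega[d]$ and $vu[d]=v^\omega[d]$; combining these gives $u^\omega[d]=v^\omega[d]$, contradicting that $d=\abs{p}+1$ is by definition the first position at which $u^\omega$ and $v^\omega$ differ. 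Hence $p=uv\lcp vu$, which together with the first paragraph finishes both parts.

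The step I expect to be the main obstacle is the index arithmetic underlying the periodicity computation: one must argue carefully that every shifted index $i-\abs{u}$ (respectively $i-\abs{v}$) lands inside the range $[1,\abs{v})$ (respectively $[1,\abs{u})$) on which $p$ genuinely inherits the relevant period from $v^\omega$ (respectively $u^\omega$). This is precisely where the Fine--Wilf bound $\abs{p}<\abs{u}+\abs{v}$ is indispensable; without it a shifted index could run past the first period block and the identification $uv[i]=p[i]$ would fail.
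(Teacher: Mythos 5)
Your proof is correct, but it takes a genuinely different route from the paper's. The paper proves the equality $u^\omega\lcp v^\omega = uv\lcp vu$ by an explicit factorization argument: after disposing of the case $\abs{u}=\abs{v}$ and assuming w.l.o.g.\ $\abs{u}<\abs{v}$, it writes $v\lcp u^\omega = u^ku'$ with $v=u^ku'v'$, $u=u'u''$, computes $uv\lcp vu = u^ku'(u''u'v'\lcp v'u'u'')$, and splits on whether $v'=\ew$; Lemma~\ref{lem:fw} enters only at the very end, to supply the length bound. You instead run a symmetric, letter-by-letter periodicity argument: setting $p:=u^\omega\lcp v^\omega$, you show $p\noq uv$ and $p\noq vu$ by transporting each index $i$ back by one period ($i\mapsto i-\abs{u}$, resp.\ $i\mapsto i-\abs{v}$), and then show that $uv$ and $vu$ must already disagree at position $\abs{p}+1$ because there $uv$ still tracks $u^\omega$ and $vu$ still tracks $v^\omega$. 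The trade-off is that your argument uses Fine--Wilf twice over: the bound $\abs{p}<\abs{u}+\abs{v}$ is needed to keep the shifted indices inside a single period block (as you yourself point out), whereas the paper's equality proof is independent of Lemma~\ref{lem:fw}. What your approach buys is uniformity — no case split on $\abs{u}$ versus $\abs{v}$ and no factorization case analysis — while the paper's approach yields explicit factorization data (the conjugate structure around the point of divergence) in the same style as the arguments it reuses later in Theorem~\ref{thm:lcp-pump}. One cosmetic point: in your prefix argument, the first case should read $i\le\min(\abs{u},\abs{p})$ rather than $i\le\abs{u}$, since $p[i]$ is only defined for $i\le\abs{p}$; this does not affect correctness.
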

\begin{proof}
Since the bound of the size of $\abs{uv\lcp vu}$ follows from Lemma \ref{lem:fw}
we only have to show that $uv\lcp vu = u^\omega\lcp v^\omega$.
If $\abs{u} = \abs{v}$, then $uv\neq vu$ implies $u\neq v$ and $uv\lcp vu = u\lcp v = u^\omega \lcp v^\omega$.

W.l.o.g.\ we assume that $\abs{u}<\abs{v}$. As $uv\neq vu$, we have $\ew \neq u$.
Let $v\lcp u^\omega = u^ku'\no u^{k+1}$ with $v=u^ku'v'$ and $u=u'u''$.
It follows that
$uv\lcp vu = uu^ku'v' \lcp u^ku'v'u = u^k(uu'v'\lcp u'v'u) = u^k u'(u''u'v'\lcp v'u'u'')$.

If $v'\neq \ew$, we have $u''u'v'\lcp v'u'u'' = u'' \lcp v' = \ew$, and thus
$uv\lcp vu = u^ku' = v\lcp u^\omega = v^\omega \lcp u^\omega$.

So assume $v'=\ew$, i.e.\ $v\no u^\omega$ with $k>0$ as $\abs{u}<\abs{v}$. As $uv=u^ku'u''u' \neq u^ku'u'u'' = vu$, also $u'u''\neq u''u'$.
Hence $uv\lcp vu = u^k u'(u''u'\lcp u'u'') = u^{k+1}u \lcp vv = u^\omega \lcp v^\omega$,
which concludes the proof.
\end{proof}
Here is a short example for the last corollary:
\begin{example}
Let $u=aab$, $v=aaba =ua$. Then $uv\lcp vu = aabaaba\lcp aabaaab = aabaa = va$ and
$u^\omega\lcp v^\omega = aabaabaabu^\omega \lcp aabaaabav^\omega = aabaa$
with $\abs{aabaa}=\abs{u}+\abs{v} - \gcd(\abs{u},\abs{v})-1$. I.e.\ the bound is sharp.
Note that this example also shows, that even if $uv \neq vu$ and $\ew \neq u \no v$, we still can have $v \no uv\lcp vu$.
\end{example}
We briefly discuss properties of the \txtLCP for very simple regular languages. These will be used several times in the proofs of Section~\ref{sec_height} in order to bound the height of the derivation trees we need to consider:
\begin{lemma}\label{lem:lcp-reg-1}
Let $y\neq \ew$, then $w\lcp yw = w\lcp y^i w = \biglcp y^\ast w = w\lcp y^\omega$ for all $i>0$.
\end{lemma}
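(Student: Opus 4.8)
The plan is to reduce all four quantities to the single identity $w\lcp yw = w\lcp y^\omega$ (for $y\neq\ew$) and to prove that one identity carefully; the remaining equalities then follow by purely formal manipulation. Write $d:=w\lcp y^\omega$. I would establish $w\lcp yw = d$ by proving the two prefix inclusions separately.

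For $d\noq w\lcp yw$ it suffices to check $d\noq w$ (immediate by definition of $d$) and $d\noq yw$. If $\abs d\le\abs y$ then $d\noq y\noq yw$. Otherwise $d\noq y^\omega = y\,y^\omega$ forces a factorization $d=yd'$ with $d'\noq y^\omega$; now $d'$ and the prefix of $d$ of length $\abs{d'}$ are both prefixes of $y^\omega$ of the same length, hence coincide, so $d'$ is a prefix of $d$ and therefore $d'\noq w$. This gives $d=yd'\noq yw$. For the reverse inclusion $w\lcp yw\noq d$, set $A:=w\lcp yw$; since $A\noq w$ it is enough to show $A\noq y^\omega$. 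If $\abs A\le\abs y$ this is clear as $A\noq y\noq y^\omega$. Otherwise I compare letters: for $\abs y<j\le\abs A$ both $A\noq w$ and $A\noq yw$ give $w[j]=A[j]=(yw)[j]=w[j-\abs y]$, so the length-$\abs A$ prefix of $w$ is $\abs y$-periodic, while for $j\le\abs y$ the same comparison yields $w[j]=y[j]$; hence $A$, which is exactly this prefix of $w$, is a prefix of $y^\omega$. (Alternatively, the identity drops out of the ultrametric property of $\lcp$: by left-distributivity $yw\lcp y^\omega = y(w\lcp y^\omega)=yd$, which is strictly longer than $d=w\lcp y^\omega$, so the shorter common prefix wins and $w\lcp yw=d$.)

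Granting the identity, the rest is bookkeeping. Applying it with the non-empty word $y^i$ in place of $y$ and using $(y^i)^\omega=y^\omega$ yields $w\lcp y^i w = w\lcp y^\omega = d$ for every $i>0$. For the language $y^\ast w$: every one of its words is some $y^i w$ with $y^0w=w\noqr d$ and $y^iw\noqr w\lcp y^iw=d$ for $i\ge 1$, so $d$ is a common prefix of $y^\ast w$ and thus $d\noq\biglcp y^\ast w$; conversely $\{w,yw\}\subseteq y^\ast w$ gives $\biglcp y^\ast w\noq w\lcp yw=d$ by monotonicity of the \txtLCP on unions. Hence all four expressions equal $d$.

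The \emph{main obstacle} is precisely the inclusion $w\lcp yw\noq y^\omega$: this is the only point at which the periodic structure of $y^\omega$ is genuinely used, via the $\abs y$-periodicity that $A\noq w$ and $A\noq yw$ force on the relevant prefix of $w$, and it is exactly the step that the ultrametric shortcut is designed to circumvent. Everything else is monotonicity of $\biglcp$ and left-distributivity of concatenation over $\lcp$.
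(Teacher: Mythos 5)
Your proof is correct, but it is organized differently from the paper's. The paper proves all the claimed equalities at once by factorizing $w$ along $y^\omega$: writing $w\lcp y^\omega = y^{k}y'\no y^{k+1}$ with $w=y^ky'w'$, it observes directly that $w\lcp y^iw = w\lcp y^{k+i}y'w' = w\lcp y^\omega$ for every $i>0$, because $y^iw$ agrees with $y^\omega$ strictly beyond the point $y^ky'$ where $w$ stops doing so. You instead prove only the single identity $w\lcp yw = w\lcp y^\omega$, by two separate prefix inclusions established through letter-indexing (the $\abs{y}$-periodicity that $A\noq w$ and $A\noq yw$ force on the prefix $A=w\lcp yw$), and then recover general $i$ by substituting $y^i$ for $y$ together with $(y^i)^\omega=y^\omega$ --- a step the paper never needs --- and $\biglcp y^\ast w$ by a sandwich argument, which the paper leaves implicit. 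Note that your parenthetical ``ultrametric'' shortcut is in substance exactly the paper's reasoning: $yw\lcp y^\omega = y(w\lcp y^\omega)$ is strictly longer than $w\lcp y^\omega$, so $w$ and $yw$ must branch exactly where $w$ and $y^\omega$ do; the paper's factorization is a concrete rendering of this one-line argument. As for what each route buys: the paper's computation is shorter and handles all $i$ uniformly in a single stroke, while your double-inclusion argument is more pedestrian but entirely self-contained (pure letter counting, no factorization notation), and your substitution trick is a clean, reusable way to bootstrap the $i=1$ case to arbitrary powers.
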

\begin{proof}
Let $w\lcp y^\omega = y^{k}y'\no y^{k+1}$ with $w=y^ky'w'$. Then for any $i>0$ we have 
$w\lcp y^iw = w \lcp y^{k+i}y'w' = w\lcp y^\omega$ where the last equality holds as $i>0$ and $w\lcp y^{k+1}=w\lcp y^\omega\no y^{k+1}$.
\end{proof}
\begin{lemma}\label{lem:w}
If $w\not\noq yw$, then $\biglcp y^\ast w = w\lcp y^iw \no w$ for all $i>0$.
\end{lemma}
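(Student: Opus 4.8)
The plan is to derive this as a short corollary of the preceding Lemma~\ref{lem:lcp-reg-1}. Observe that Lemma~\ref{lem:lcp-reg-1} already supplies all the equalities in the chain $\biglcp y^\ast w = w\lcp y^iw$ (for every $i>0$) as soon as we know $y\neq\ew$; the only genuinely new content of the present statement is the strict-prefix conclusion $\no w$. So I would split the argument into two parts: first, check that the hypothesis makes Lemma~\ref{lem:lcp-reg-1} applicable and hence yields the equalities; second, translate the hypothesis $w\not\noq yw$ into the strict prefix $\no w$.

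For the first part, I would note that the hypothesis forces $y\neq\ew$: if $y=\ew$ then $yw=w$, so $w\noq yw$ holds trivially, contradicting $w\not\noq yw$. With $y\neq\ew$ in hand, Lemma~\ref{lem:lcp-reg-1} immediately gives $\biglcp y^\ast w = w\lcp y^iw = w\lcp yw$ for all $i>0$, so it remains only to locate this common value strictly below $w$.

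For the second part, the key elementary fact I would invoke is that $w\lcp v\noq w$ for every word $v$, with equality $w\lcp v=w$ holding precisely when $w\noq v$. Applying this with $v=yw$, the hypothesis $w\not\noq yw$ says exactly that $w\lcp yw\neq w$, and since $w\lcp yw$ is in any case a prefix of $w$ this means $w\lcp yw\no w$. (One may also read this off the fact that $w=y^0w$ lies in $y^\ast w$, so $\biglcp y^\ast w\noq w$, and the $i=1$ term $yw$ witnesses that the inequality is strict.) Chaining this with the equalities from the first part gives $\biglcp y^\ast w = w\lcp y^iw = w\lcp yw \no w$ for every $i>0$, as claimed.

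I do not expect a genuine obstacle here: given Lemma~\ref{lem:lcp-reg-1}, the statement is essentially routine. The only points requiring a little care are the clean equivalence between ``$w$ is not a prefix of $yw$'' and ``$w\lcp yw$ is a \emph{strict} prefix of $w$'', and the preliminary remark that the hypothesis rules out $y=\ew$ so that Lemma~\ref{lem:lcp-reg-1} may legitimately be applied.
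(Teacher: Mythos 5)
Your proposal is correct and follows essentially the same route as the paper: both derive $y\neq\ew$ from the hypothesis, invoke Lemma~\ref{lem:lcp-reg-1} to get $\biglcp y^\ast w = w\lcp y^i w$ for all $i>0$, and then extract the strictness $\no w$ from $w\not\noq yw$. The only (minor) difference is the final step: the paper reuses the factorization $w=y^ky'w'$ from the proof of Lemma~\ref{lem:lcp-reg-1} and observes $w'\neq\ew$, whereas you argue directly that $w\lcp yw\noq w$ always holds and that equality would contradict the hypothesis --- a slightly more self-contained way to finish.
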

\begin{proof}
Since $w\not\noq yw$, we have $w\neq\ew$ and $y\neq \ew$. By Lemma~\ref{lem:lcp-reg-1} we thus have $\biglcp y^\ast w = w\lcp y^iw$ for any $i>0$, in particular for $i=1$. Define $w=y^ky'w'$ as in Lemma~\ref{lem:lcp-reg-1}. As $w\not\noq yw$, we have $w'\neq \ew$ and thus $w\lcp yw = y^ky' \no w$.
\end{proof}
We assume that the reader is familiar with context-free grammars (CFGs). We briefly introduce the notation we use for CFGs in the following. A context-free grammar $G$ is given by a tuple $G=(\al,V,P,S)$ where $\al$ is the alphabet of terminals, $V$ is the set of nonterminals (also: variables), $P\subseteq V\times (V\cup\al)^\ast$ is the set of production rules where a rule $p=(A,\gamma)\in P$ is also written as $A\to \gamma$, and $S$ the axiom. The language generated by $G$ is denoted by $L(G)$. $G$ is {\em proper} if $A\to\ew \not\in P$ and $A\to B\not\in P$ for all $A,B\in V$; $G$ is in {\em Chomsky normal form (CNF)} if all rules are of the form $A\to a\in V\times \al$ or $A\to BC\in V\to VV$.  For every CFG $G$ a proper CFG resp.\ a CFG in CNF  $G'$ can be constructed in time polynomial in the size of $G$ such that $L(G)\setminus\{\ew\}=L(G')$~\cite{DBLP:journals/didactica/LangeL09}.
As $\ew\stackrel{?}{\in} L(G)$ is decidable in time polynomial in the size of $G$, and trivially $\biglcp L = \ew$ if $\ew\in L$, we will assume that $\ew\not\in L(G)$ and that $G$ is proper from here on. For some proofs we assume in fact that $G$ is in CNF but only in order to simplify notation.

\section{LCP of a context-free language}\label{sec_height}

Our main result in this section, Theorem~\ref{thm:witness-height}, is that for every context-free language $L=L(G)$ generated by the given CFG $G$ its \txtLCP $\biglcp L$ is equal to the \txtLCP of its {\em finite} sublanguage $L'$ which contains only the words $w\in L$ which possess a derivation tree w.r.t.\ $G$ whose height (considering only nonterminals) is at most four times the number of nonterminals of $G$. 
For the main result we require the following technical theorem (see the following example).
\begin{theorem}\label{thm:lcp-pump}
Let $L=(x,\bx)[(y_1,\by_1)+\ldots+(y_l,\by_l)]^\ast w$ for $(x,\bx),(y_1,\by_1),\ldots,(y_l,\by_l)\in\ct$ and $w\in \als$.
Then: $$\biglcp L = \biglcp (x,\bx)[(y_1,\by_1)^{\le 2}+\ldots+(y_k,\by_l)^{\le 2}] w$$
Furthermore, if $\biglcp L = xw\bx\lcp xy^2w\by^2\bx\no xw\bx\lcp xyw\by\bx$ for some $(y,\by)\in\{(y_1,\by_1),\ldots,(y_l,\by_l)\}$, 
then w.r.t.\ this $y$ there exists some primitive $q\in\als$ and some $k>0$ such that
\[
yw=wq^k \wedge 
q\by \neq \by q \wedge
\biglcp L = xw\bar{x} \lcp xywq\bar{y}\bar{x} 
\wedge xwq^k(\by\lcp q^\omega)\noq \biglcp L \no xw q^{k+1} (\by\lcp q^\omega)
\]
\end{theorem}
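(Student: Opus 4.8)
The plan is to reduce both assertions to a single periodic situation governed by one generator. Set $xw\bx$ for the word obtained from the empty product; since $xw\bx\in L$ we have $\biglcp L\noq xw\bx$, so $\biglcp L$ is the shortest of the prefixes $xw\bx\lcp\alpha$ as $\alpha$ ranges over $L$. For the first equation I would first analyse a single generator, i.e.\ the chain $\alpha_n:=xy^nw\by^n\bx$, and then reduce a mixed word $x\,y_{i_1}\cdots y_{i_n}\,w\,\by_{i_n}\cdots\by_{i_1}\,\bx$ to it: its divergence from $xw\bx$ is controlled by the first generator $y_{i_1}$, and an exchange argument via Corollary~\ref{cor:fw} should show that the earliest divergence from $xw\bx$ realised anywhere in $L$ is already realised by some single generator using at most two copies. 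This gives $\biglcp L=\biglcp(x,\bx)[(y_1,\by_1)^{\le 2}+\ldots+(y_l,\by_l)^{\le 2}]w$, and it also isolates the single-generator chain as the technical core that the ``Furthermore'' part refines.

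For the single generator I would begin with the dichotomy supplied by Lemmas~\ref{lem:lcp-reg-1} and~\ref{lem:w}. If $w\not\noq yw$ then $w\lcp y^iw=w\lcp yw\no w$ for all $i>0$, so every $\alpha_n$ already differs from $xw\bx$ strictly inside the factor $xw$, before the barred suffixes matter; hence $xw\bx\lcp\alpha_n=x(w\lcp yw)$ is independent of $n\ge1$ and a single copy suffices. Under the hypothesis $xw\bx\lcp xy^2w\by^2\bx\no xw\bx\lcp xyw\by\bx$ this forces $w\noq yw$, since otherwise the two prefixes would coincide.

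Granting $w\noq yw$, I would write $yw=wq^k$, where $q$ is the primitive root of the length-$\abs{y}$ suffix of $yw$ and $k>0$; induction then yields $y^nw=wq^{kn}$, so $\alpha_n=xw\,q^{kn}\by^n\bx$. Next I rule out $q\by=\by q$: by Lemma~\ref{lem:com} this would give $\by\in q^\ast$, whence $q^{kn}\by^n\bx\in q^\ast\bx$, and Lemma~\ref{lem:lcp-reg-1} would make $xw\bx\lcp\alpha_1$ and $xw\bx\lcp\alpha_2$ agree, contradicting the hypothesis. Thus $q\by\neq\by q$, which is the second asserted property, and recall that $xywq\by\bx=xw\,q^{k+1}\by\bx$.

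The hard part is locating $\biglcp L$ exactly. The plan is to compare every $\alpha_n$ with the single infinite periodic word $xw\,q^\omega$: the word $\alpha_n$ follows it along $xw\,q^{kn}$ and then along $\by^n\bx\lcp q^\omega$, leaving the period at position $\abs{xw}+kn\abs{q}+\abs{\by^n\bx\lcp q^\omega}$. Since $q$ is primitive and $q\by\neq\by q$, Corollary~\ref{cor:fw} bounds how long the blocks $\by^n\bx$ can remain inside $q^\omega$, and I would use this to show that the mismatch deciding the greatest lower bound already occurs within the first two blocks, giving $\biglcp L=xw\bx\lcp xw\,q^{k+1}\by\bx=xw(\bx\lcp q^{k+1}\by\bx)$ together with the sandwich $xw\,q^{k}(\by\lcp q^\omega)\noq\biglcp L\no xw\,q^{k+1}(\by\lcp q^\omega)$. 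I expect the genuine obstacle to be the \emph{tie} configurations in which several $\alpha_n$ leave $q^\omega$ at the very same position and even with the same next letter --- as with $x=w=\ew$, $y=q=ab$, $\by=b$, $\bx=abba$, where $\alpha_0=abba$ and $\alpha_1=abbabba$ both leave $(ab)^\omega$ at position $3$ with the letter $b$, and only $\alpha_2=ababbbabba$ separates them. There comparing lengths is inconclusive and one must compare the letters following the common periodic prefix; it is exactly here that $q\by\neq\by q$ together with Corollary~\ref{cor:fw} forces the separating mismatch to fall between the first and the second block, which is precisely what the sandwich records.
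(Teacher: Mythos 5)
Your treatment of the single-generator chain $(x,\bx)(y,\by)^\ast w$ is essentially the paper's own proof of that case (Lemma~\ref{ssec:lcp-one-pt}): the same dichotomy via Lemmas~\ref{lem:lcp-reg-1} and~\ref{lem:w}, the same conjugation $yw=wq^k$ with $q$ primitive, the same use of Lemma~\ref{lem:com} to rule out $q\by=\by q$, and the same comparison against $xwq^\omega$ to obtain the sandwich --- although you defer the part that actually pins down $\biglcp L$, which in the paper requires the factorizations $\bx=q^nq'\bx'$, $\by=q^{k'}\hq\by'$ and several subcases to resolve exactly the tie configurations you point to. Granted the first displayed equation, your reduction of the ``Furthermore'' clause to this lemma is also sound, since $L''=(x,\bx)(y,\by)^\ast w\subseteq L$ together with the hypothesis forces $\biglcp L=\biglcp L''$.

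The genuine gap is the reduction of mixed words to single-generator chains. Your exchange argument --- swap two adjacent generators in a witness and invoke Corollary~\ref{cor:fw} --- produces a divergence only when the unbarred parts fail to commute: if $y_iy_j=y_jy_i$, the two swapped words agree on their entire unbarred prefix, Corollary~\ref{cor:fw} does not apply, and the swap yields no information. Likewise your claim that the divergence of a mixed word from $xw\bx$ ``is controlled by the first generator $y_{i_1}$'' is false in exactly this situation: in the paper's example following the theorem, $(y,\by)=(ab,abaab)$ and $(z,\bz)=(ab,abaac)$ have identical unbarred parts, and where a mixed word such as $xyzw\bz\by\bx$ departs from $xw\bx$ is decided deep inside the barred suffixes, by how $\bz\by\bx$ versus $\by\bz\bx$ align with the conjugate period $q$. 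This commuting configuration is not a degenerate corner; it is precisely the case needed for Theorem~\ref{thm:witness-height}, where all $y_i$ are powers of one primitive $p$, and it is where the paper spends almost all of its appendix: the case $yz=zy$ of Lemma~\ref{ssec:lcp-two-pt}, which factorizes $\bx,\by,\bz$ against $q$ and eliminates, subcase by subcase, the possibility that only a mixed word is a witness, followed by Lemma~\ref{ssec:lcp-mult-pt}, which lifts the two-generator statement to arbitrary $l$ by induction on a minimal factorization of a witness. Without some replacement for this analysis, the first equation of Theorem~\ref{thm:lcp-pump} --- and with it the ``Furthermore'' part as a statement about the full language $L$ --- remains unproven.
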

The proof of the main theorem of this section, Theorem~\ref{thm:witness-height}, crucially depends on the observation that in the case $\biglcp L \no xw\bx\lcp xyw\by\bx$, all the words $y_i$ are powers of the same primitive word $p$ with $pw=wq$ and all that is needed to obtain a witness is one additional power of $p$ resp.\ its conjugate $q$ (with $pw=wq$) to which Theorem~\ref{thm:lcp-pump} refers to. 
We give an example in order to clarify the statement of Theorem~\ref{thm:lcp-pump} in the case of $l=2\wedge y_1y_2=y_2y_1$ which is central to Theorem~\ref{thm:witness-height}:
\begin{example}
We write $(y,\by)$ for $(y_1,\by_1)$ and $(z,\bz)$ for $(y_2,\by_2)$, respectively.
Let $(x,\bx)=(\ew,ababaaa)=(\ew,qqaaa)$, $(y,\by)=(\tcr{ab},\tcb{abaab})=(q,qaab)$, $(z,\bz)=(\tcr{ab},\tcg{abaac})=(q,qaac)$, and $w=\ew$ with $q=ab=y=z$. We then have:
\[
\begin{array}{lcl}
xw\bar{x} & = & ababaaa\\
xyw\bar{y}\bar{x} & = & \tcr{ab}\tcb{abaab}ababaaa \\
xzw\bz\bx & = &  \tcr{ab}\tcg{abaac}ababaaa\\
xyyw\by\bar{y}\bar{y}\bar{x} & = & \tcr{abab}\tcb{abaababaab}ababaaa \\
xyzw\bz\by\bx & = & \tcr{abab}\tcg{abaac}\tcb{abaab}ababaaa\\
xzyw\by\bz\bx & = & \tcr{abab}\tcb{abaab}\tcg{abaac}ababaaa\\
xzzw\bz\bz\bx & = & \tcr{abab}\tcg{abaacabaac}ababaaa\\
x(y+z)^{\ge 3}\ldots & = & \tcr{ababab}\ldots\\
xywq\by\bx & = & \tcr{ab}ab\tcb{abaab}ababaaa \\
xzwq\bz\bx & = & \tcr{ab}ab\tcg{abaac}ababaaa\\
\hline
\biglcp L & = & ababa\\
\end{array}
\]
So in this example, any word except for $xyw\by\bx$ and $xzw\bz\bx$ is a witness for the \txtLCP w.r.t.\ $xw\bx$. W.r.t.\ the proof of Theorem~\ref{thm:witness-height} it is important that also in general we can pick a witness which either is derived using only $(y,\by)$ or $(z,\bz)$ but not both, and that we need to use $(y,\by)$ resp.\ $(z,\bz)$ at most twice in order to get one additional copy of the conjugate $q$ of the primitive root of both $y$ and $z$.
\end{example}
To give an impression of the proof of Theorem~\ref{thm:lcp-pump} we show the case $l=1$.
The complete proof of Theorem~\ref{thm:lcp-pump} can be found in the appendix of \cite{DBLP:journals/corr/LuttenbergerPS17}.
\begin{lemma}\label{ssec:lcp-one-pt}
Let $L=(x,\bar{x})(y,\bar{y})^\ast w$. Then: $\biglcp L = \biglcp (x,\bar{x})(y,\bar{y})^{\le 2}w$.
\noindent 
If $\biglcp L \no xw\bx\lcp xyw\by\bx$,
then there is some primitive $q$ and some $k>0$ s.t.\
\[
yw=wq^k \wedge 
q\by \neq \by q \wedge
\biglcp L = xw\bar{x} \lcp xywq\bar{y}\bar{x} 
\wedge xwq^k(\by\lcp q^\omega)\noq \biglcp L \no xw q^{k+1} (\by\lcp q^\omega)
\]
\end{lemma}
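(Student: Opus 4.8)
\emph{Proof plan.} The idea is to strip off the common parts and reduce to an iterated comparison against a single infinite periodic word $q^\omega$, where $q$ is a conjugate of the primitive root of the period that $y$ induces on $w$. If $y=\ew$ then $L=\{xw\bx\}$ is a singleton and everything is trivial, so assume $y\neq\ew$. Since concatenation distributes from the left over $\lcp$ and $(x,\bx)(y,\by)^iw = xy^iw\by^i\bx$, we have $\biglcp L = x\cdot\biglcp_{i\ge0}(y^iw\by^i\bx)$. I distinguish whether $w\noq yw$. If $w\not\noq yw$, then by Lemma~\ref{lem:w} the left factors already disagree strictly inside the block $w$, with $w\lcp y^iw = w\lcp yw\no w$ independent of $i>0$; as this disagreement precedes every $\by^i\bx$, it gives $\biglcp L = xw\bx\lcp xyw\by\bx = \biglcp(x,\bx)(y,\by)^{\le1}w$, so the first claim holds and the premise of the second is violated. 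Otherwise $w\noq yw$, i.e.\ $yw=ws$ for the unique $s$ of length $\abs{y}$ with $s\neq\ew$. Writing $s=q^k$ with $q$ primitive and $k>0$ (the primitive-root factorisation) yields $yw=wq^k$, hence $y^iw\by^i\bx = wq^{ki}\by^i\bx$. Factoring $w$ to the left as well, it remains to analyse
\[
\biglcp L = xw\cdot\biglcp_{i\ge0}v_i,\qquad v_i:=q^{ki}\by^i\bx .
\]

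Next I split on whether $q\by=\by q$. If $q\by=\by q$, then by Lemma~\ref{lem:com} both are powers of a common primitive word, and primitivity of $q$ forces $\by=q^m$ for some $m\ge0$; thus $v_i=q^{(k+m)i}\bx$ with $k+m>0$, so Lemma~\ref{lem:lcp-reg-1} gives $\biglcp_{i\ge0}v_i = \bx\lcp q^{k+m}\bx = v_0\lcp v_1$, once more establishing the first claim and falsifying the premise of the second. The substantial case is $q\by\neq\by q$. Here the plan is to compare every $v_i$ with the infinite word $q^\omega$: for $i\ge1$ the word $v_i$ begins with $q^{ki}$, so among these the one carrying the fewest leading copies of $q$ leaves the path $q^\omega$ first; making this precise one obtains $\biglcp_{i\ge1}v_i = v_1\lcp v_2$, and intersecting with $v_0$ yields the first claim
\[
\biglcp_{i\ge0}v_i = v_0\lcp v_1\lcp v_2,\qquad\text{i.e.}\qquad \biglcp L = \biglcp(x,\bx)(y,\by)^{\le2}w .
\]

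For the conditional second statement assume $\biglcp L\no xw\bx\lcp xyw\by\bx$, i.e.\ $\biglcp_{i\ge0}v_i\no v_0\lcp v_1$. The required facts $yw=wq^k$ and $q\by\neq\by q$ are already in hand from the construction and the present case. The key structural consequence of the premise is that $\bx$ must agree with $q^\omega$ on a long prefix: if $\bx$ diverged from $q^\omega$ strictly before position $k\abs{q}+\abs{\by\lcp q^\omega}$, then $v_0=\bx$ and $v_1=q^k\by\bx$ would already disagree at that earlier position, forcing $\biglcp_{i\ge0}v_i=v_0\lcp v_1$ and contradicting the premise. Hence $q^k(\by\lcp q^\omega)\noq\bx$, which gives the lower bound $xwq^k(\by\lcp q^\omega)\noq\biglcp L$. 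Then a short comparison of divergence positions identifies $\biglcp_{i\ge0}v_i$ with $\bx\lcp q^{k+1}\by\bx$; since $xywq\by\bx=xwq^{k+1}\by\bx$, this is precisely $\biglcp L = xw\bx\lcp xywq\by\bx$, and the same bookkeeping delivers the strict upper bound $\biglcp L\no xwq^{k+1}(\by\lcp q^\omega)$.

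The main obstacle is the case where $\by$ is itself a prefix of $q^\omega$: then the powers $\by^i$ continue along $q^\omega$ for several steps before diverging, so the naive ``two copies suffice'' bound fails on the $\by$-side in isolation, and one must control how long $\by^i$ can track $q^\omega$. This is exactly where Corollary~\ref{cor:fw} enters, bounding $\abs{\by^\omega\lcp q^\omega}=\abs{\by q\lcp q\by}<\abs{\by}+\abs{q}$ and thereby pinning the divergence of the $\by$-part to a single fixed position and letter, independent of $i$. The leading blocks $q^{ki}$ then guarantee that $v_1$ reaches this divergence strictly before $v_2$ does, so that one additional conjugate copy of $q$ — rather than an unbounded number of copies of $\by$ — suffices to expose the distinguishing letter. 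Combined with the long common prefix of $\bx$ and $q^\omega$ forced by the premise, this closes both the reduction to two pumpings and the sandwich $xwq^k(\by\lcp q^\omega)\noq\biglcp L\no xwq^{k+1}(\by\lcp q^\omega)$. The genuinely delicate bookkeeping lives entirely in this $\by\noq q^\omega$ subcase together with the ``resonance'' boundary where $\bx$ leaves $q^\omega$ at the very position and letter at which the pumped part does.
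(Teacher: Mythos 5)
Your skeleton is essentially the paper's: dispose of degenerate cases, use Lemma~\ref{lem:w} when $w\not\noq yw$, conjugate to $yw=wq^k$ so that the words become $v_i=q^{ki}\by^i\bx$ (after absorbing $xw$), split on $q\by=\by q$ via Lemma~\ref{lem:com}, and compare everything against $q^\omega$. Two problems, one small and one fatal. The small one: if $y=\ew$ but $\by\neq\ew$, then $L=\{xw\by^i\bx\mid i\ge 0\}$ is infinite, not a singleton; this case is not trivial but follows from Lemma~\ref{lem:lcp-reg-1}, which is how the paper treats $y=\ew\vee\by=\ew$.

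The genuine gap is the hard case $q\by\neq\by q$, where your argument rests on the claim that among $v_1,v_2,\ldots$ the word with the fewest leading copies of $q$ leaves $q^\omega$ first, later refined to: Corollary~\ref{cor:fw} pins the divergence of the $\by$-part at a position independent of $i$, so $v_1$ diverges strictly before $v_2$. Both claims are false. Take $q=ab$, $k=1$, $\by=a$, $\bx=bab$ (so $x=w=\ew$, $y=ab$): then $v_1=ababab$ is a prefix of $q^\omega$ of length $6$, while $v_2=ababaabab$ leaves $q^\omega$ already at position $5$, i.e.\ $v_2$ diverges strictly \emph{before} $v_1$. This is exactly the resonance you flag at the end: $\by^\omega\lcp q^\omega$ controls $\by^i\bx$ only for large $i$; for small $i$ the tail $\bx$ can continue along $q^\omega$ past the point where $\by^\omega$ would diverge, so Corollary~\ref{cor:fw} applied to $\by^\omega\lcp q^\omega$ says nothing about $v_1$. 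Consequently your central claims --- $\biglcp_{i\ge 1}v_i=v_1\lcp v_2$, the identification $\biglcp L=xw\bx\lcp xwq^{k+1}\by\bx$, and the strict bound $\biglcp L\no xwq^{k+1}(\by\lcp q^\omega)$ --- are only asserted (``making this precise'', ``the same bookkeeping''), and the one justification offered is invalid, even though the statements themselves happen to hold in this example. This is precisely where the paper's proof does its real work: it factorizes $\bx\lcp q^\omega=q^nq'$ and $\by\lcp q^\omega=q^{k'}\hq$, eliminates the cases $q^nq'\noq q^{k+k'}\hq$ and $\by\neq q^{k'}\hq$ by exhibiting explicit witnesses among $v_1,v_2$, and in the residual case $\by=q^{k'}\hq$ computes $v_1\lcp v_2=q^{k+k'}(\hq q\lcp q\hq)\no q^{k+k'+1}\hq$, which is the step that bounds the extension beyond $q^{k+k'}\hq$ by fewer than $\abs{q}$ letters and yields both the two-pumping bound and the sandwich $xwq^k(\by\lcp q^\omega)\noq\biglcp L\no xwq^{k+1}(\by\lcp q^\omega)$. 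Your derivation of the lower bound (the premise forces $q^k(\by\lcp q^\omega)\noq\bx$, hence $xwq^k(\by\lcp q^\omega)\noq\biglcp L$) is sound, but without a replacement for the paper's case analysis the remainder of the hard case, and with it both halves of the lemma in that case, remains unproved.
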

\begin{proof}
Recall that for any $z\in L$ there is some {\em witness} $z'\in L$ s.t.\ $\biglcp L = z\lcp z'$. 
Our main goal is to show that w.r.t.\ $xw\bx$ we find a witness within $\{xy^iw\by^i\bx \mid i=0,1,2\}$.
What makes the proof technically more involved is that for Theorem~\ref{thm:witness-height} we need a stronger characterization of the case when $xyyw\by\by\bx$ is the only witness in this set.

If $y=\ew\vee \by=\ew$, then $L$ is actually regular and Lemma~\ref{lem:lcp-reg-1} already tells us that $xyw\by\bx$ is a witness (w.r.t.\ $xw\bx$). So wlog.\ $y\neq\ew\neq\by$.
If $w\not\noq yw$, then $\biglcp y^\ast w = w\lcp yw \no w$ by Lemma~\ref{lem:w} and thus $\biglcp L = x(w\lcp yw)$, i.e.\ $xyw\by\bx$ is again a witness.  

From now on we assume that $w\noq yw$. Then there is some conjugate $\ry$ of $y$ defined by $w\ry = yw$, and $xw$ is a prefix of $\biglcp L$ as
$xy^iw\by^i \bx = xw\ry^i\by^i\bx$. Wlog.\ we therefore assume $xw=\ew$ from now on so that $L$ becomes $\{y^i\by^i\bx\mid i\in \N_0\}$. 

Let $q$ be the primitive root of $y$ s.t.\ $y = q^k$ for a suitable $k>0$ (as $y\neq\ew$). By choosing $j> \abs{\bx}/\abs{y}$ we obtain $\biglcp L \noq \bx \lcp y^j \by^j\bx = \bx \lcp q^{kj} \no q^\omega$,
i.e.\ $\biglcp L \no q^\omega$.
We therefore factorize $\bx$ and $\by$ w.r.t.\ $q^\omega$:
Let $\bx = q^n q' \bx'$ with $\bx\lcp q^\omega = q^n q' \no q^{n+1}$; and let $\by = q^{k'} \hq \by'$ with $\by \lcp q^\omega = q^{k'}\hq \no q^{k'+1}$.
The words of $L$ have thus the form
$y^i\by^i\bx = q^{ik}\Bigl(q^{k'}\hq \by'\Bigr)^i q^n q' \bx'$.

If $q$ (resp.\ $y$) and $\by$ commute, then $\by=q^{k'}$ by Lemma~\ref{lem:com} (as $q$ is primitive) for some suitable $k'\in \N$. Then $L= (y\by)^\ast \bx = (q^{k+k'})^\ast q^nq'\bx'$ with $\biglcp L = q^nq'$, and $y\by\bx$ is again a witness w.r.t.\ $\bx$.
We thus also assume $q\by \neq \by q$ from here on.

If $q^{n}q' \noq q^{k+k'}\hq$, then $\biglcp L \noq q^{n}q'$ and $qy\by\bx$ is a witness w.r.t.\ $\bx$: by choice of $n$ we have $\bx\lcp q^\omega = \bx \lcp q^{n+1}$, by $q^nq'\noq q^{k+k'}\hq$ we also have $q^{n+1}\noq q^{k+k'+1}$; from this we obtain $\bx\lcp qy\by\bx = \bx \lcp q^{k+k'+1} \hq \bx = \bx \lcp q^{n+1} = q^nq'$. Thus, also $yy\by\by\bx$ is a witness w.r.t\ $\bx$.

Assume now that $q^{k+k'}\hq \no q^nq'$ and thus $q^{k+k'}\hq \noq \biglcp L$.
If $\biglcp L=q^{k+k'}\hq$, then $\bx\lcp y\by\bx = q^{k+k'}\hq$ has to hold, i.e.\ $y\by\bx$ has to be a witness. 
Thus assume $q^{k+k'}\hq \no \biglcp L$.
If $\by'\neq \ew$, then, as $q^{k+k'}\hq\no q^nq'$, we have that $q^nq'\lcp q^{k+k'}\hq\by' = q^{k+k'}\hq$ so that $y\by\bx$ is again a witness.
Hence assume $\by'=\ew$ resp.\ $\by=q^{k'}\hq$ for the remaining.
As $q$ and $\by$ do not commute, also $q$ and $\hq$ do not commute implying $q\hq\no \hq\ q \no q\hq$. 
Thus
\[
\begin{array}{rcl}
q^{k+k'}\hq \no \biglcp L \noq y\by\bx \lcp yy\by\by\bx & = &q^{k+k'}(\hq q^nq'\bx'\lcp q^k\hq \by\bx)\\[1mm]
 &\stackrel{n\ge k>0\wedge \hq \no q}{=}& q^{k+k'}(\hq q\lcp q\hq) \no q^{k+k'}q\hq
\end{array}
\]
That is either $y\by\bx$ or $yy\by\by\bx$ has to be a witness w.r.t.\ $\bx$ as $\biglcp L \no q^\omega$ and as we can extend $q^{k+k'}\hq$ by at most $\abs{q}-1$ symbols, i.e.\ we need at most one additional copy of $q$ which is again given by $yyw\by\by\bx$ as $k>0$.
In particular, we have again that, if $yy\by\by\bx$ is a witness, then so is $qy\by\bx$.
\end{proof}
\noindent
Using Theorem~\ref{thm:lcp-pump}, we now can show that we only need to consider a finite sublanguage of $L$ instead of $L$ itself:
\begin{theorem}\label{thm:witness-height}
Let $L=L(G)$ be given by a proper CFG $G=(\al,V,P,S)$.
Let $\hat{L}\subseteq L$ be the finite language of all words of $L$ for which there is a derivation tree w.r.t.\ $G$ of height\footnote{We measure the height of a derivation tree only w.r.t.\ nonterminals along a path from the root to a leaf.} at most $4N$ with $N=\abs{V}$.
Then: $\biglcp L = \biglcp \hat{L}$.
\end{theorem}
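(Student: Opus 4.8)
The plan is to prove the two directions of the prefix order separately. Since $\hat{L}\subseteq L$, every common prefix of $L$ is a common prefix of $\hat{L}$, so $\biglcp L\noq\biglcp\hat{L}$ is immediate. The real work is the converse $\biglcp\hat{L}\noq\biglcp L$, and for this I first fix a convenient reference word. Let $w_0\in L$ admit a derivation tree of minimal height; cutting a repeated nonterminal on any root-to-leaf path strictly lowers the height and, as $G$ is proper, keeps the tree valid, so this minimal height is at most $N$ and hence $w_0\in\hat{L}$. By the existence of witnesses there is some $\alpha\in L$ with $\biglcp L=w_0\lcp\alpha$. It therefore suffices to exhibit a witness $\alpha^\ast\in L$ with $w_0\lcp\alpha^\ast=\biglcp L$ admitting a derivation tree of height at most $4N$: then $w_0,\alpha^\ast\in\hat L$ yield $\biglcp\hat{L}\noq w_0\lcp\alpha^\ast=\biglcp L$, which closes the argument.

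To find $\alpha^\ast$ I would take, among all witnesses w.r.t.\ $w_0$, one whose derivation tree has the fewest nodes, call it $\alpha$, and show that its height is automatically at most $4N$. Assume for contradiction that some root-to-leaf path carries more than $4N$ nonterminals. By the pigeonhole principle some nonterminal $A$ then occurs at least five times on this path, at nested positions $p_1,\dots,p_5$. Reading off the context derived above $p_1$, the four loop-contexts $(y_1,\by_1),\dots,(y_4,\by_4)\in\ct$ realised by the segments $A\Rightarrow^+ y_mA\by_m$ between consecutive occurrences, and the base word $w$ derived below $p_5$, we obtain some $(x,\bx)\in\ct$ and $w\in\als$ with
\[
\alpha\in(x,\bx)\bigl[(y_1,\by_1)+\dots+(y_4,\by_4)\bigr]^\ast w=:L_A\subseteq L .
\]
Each segment is a genuine derivation, so by properness each loop-subtree is non-empty.

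Now I would apply Theorem~\ref{thm:lcp-pump} to $L_A$ (its fully proved single-loop case, Lemma~\ref{ssec:lcp-one-pt}, already handles the situation where one loop used once suffices). It gives $\biglcp L_A=\biglcp(x,\bx)[\sum_m(y_m,\by_m)^{\le 2}]w$ together with a witness of the special shape $(x,\bx)(y_j,\by_j)^{i}w$ using a \emph{single} loop $i\le 2$ times; in the remaining hard case it moreover tells us that all $y_m$ are powers of one primitive word and that any loop supplies the single extra conjugate $q$ that is needed, so I may take $j$ with the loop-subtree of $(y_j,\by_j)$ smallest. Since $L_A\subseteq L$ and $\alpha\in L_A$, the union formula $\biglcp(L\cup L')=\biglcp L\lcp\biglcp L'$ gives $\biglcp(\{w_0\}\cup L_A)=w_0\lcp\biglcp L_A=\biglcp L$, so one of the finitely many words $(x,\bx)(y_j,\by_j)^{\le 2}w$ is again a witness $\alpha'$ w.r.t.\ $w_0$. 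Building the tree of $\alpha'$ from $\alpha$'s own context-, loop-$j$- and base-subtrees, $\alpha'$ retains at most two copies of a smallest loop-subtree while discarding at least three non-empty loop-subtrees of $\alpha$; as two copies of the smallest loop-subtree are strictly dominated by the combined size of the four loop-subtrees, $\alpha'$ has strictly fewer nodes than $\alpha$, contradicting minimality. Hence every path of $\alpha$ carries at most $4N$ nonterminals, so $\alpha$ itself has height at most $4N$ and is the desired $\alpha^\ast$.

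The main obstacle is the invocation of Theorem~\ref{thm:lcp-pump}: I must be certain that a witness of the one-loop shape $(x,\bx)(y_j,\by_j)^{\le 2}w$ always exists and, crucially, that in the delicate case where two copies are genuinely required the loop $(y_j,\by_j)$ may be taken to be a smallest one. This is exactly what the strengthened conclusion of Theorem~\ref{thm:lcp-pump}---the primitive root $q$ with $y_mw=wq^{k_m}$, the non-commutation $q\by_j\neq\by_j q$, and the identity $\biglcp L=xw\bx\lcp xy_jwq\by_j\bx$---is designed to control, since it shows the only thing a second loop contributes is one further conjugate $q$, which every loop provides. Checking that this single extra conjugate still leaves the node count strictly below that of $\alpha$ is what pins the threshold at exactly $4N$: with at most four occurrences of each nonterminal along every path no reduction can be forced, whereas a fifth occurrence always enables one.
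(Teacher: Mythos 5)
Your overall skeleton is the same as the paper's (a reference word of small tree height, a minimal witness $\alpha$, pigeonhole on a long path, pumping down via Theorem~\ref{thm:lcp-pump}), and the step that \emph{some} word of the single-loop shape $(x,\bx)(y_j,\by_j)^{\le 2}w$ is again a witness is correct. The gap is in the step you yourself flag as the main obstacle: you attribute to Theorem~\ref{thm:lcp-pump} the claims that, in the hard case, \emph{all} loops $y_m$ are powers of one primitive word and that \emph{any} loop supplies the extra conjugate $q$, and you use this to pick $j$ with the smallest loop-subtree. Theorem~\ref{thm:lcp-pump} asserts nothing of the sort: its ``furthermore'' clause concerns only the one specific $(y,\by)$ for which $xy^2w\by^2\bx$ is a witness while $xyw\by\bx$ is not. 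In the paper, the common-primitive structure of all loops is established \emph{before} Theorem~\ref{thm:lcp-pump} is applied, and it comes from a geometric constraint your uniform treatment of all long paths throws away: the pigeonhole is applied only to paths ending at the separating letter $b$ or to its left, so that $xy_1y_2y_3\noq\pi$, whence $xy_iy_j$ and $xy_jy_i$ are both prefixes of $\pi$ of equal length, giving $y_iy_j=y_jy_i$ and, by Lemma~\ref{lem:com}, a common primitive root. Paths ending to the right of $b$ are handled by a completely different and much easier argument (any subtree hanging off the path to $b$ on the right has height $<N$, else pumping it down yields a smaller word still carrying the prefix $\pi b$, hence still a witness); this case split is precisely where the bound $3N+N=4N$ comes from.

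Without that split your argument does not merely lose rigor, it fails. For a long path ending to the right of $b$ the four loops need not commute, and it can happen that the only witness of single-loop shape is $xy_j^2w\by_j^2\bx$ for one specific $j$ whose subtree is the \emph{largest} of the four: e.g.\ with base $xw\bx$ starting $ababaaa$, one huge loop $(ab,abaab)$ (which needs two copies to realize the \txtLCP $ababa$) and three small loops of the form $(ababaa,a\cdots)$ lying entirely right of $b$ and never producing a witness. Then $2S_j$ can exceed $S_1+S_2+S_3+S_4$ and your node-count contradiction evaporates; the correct move in such a configuration is the one the paper makes, namely discarding outright the loops lying right of $b$, which your proof never does. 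A second, independent omission: even when all loops do commute, a loop with $y_j=\ew$ but $\by_j\neq\ew$ is trivially a power of $p$ yet supplies \emph{no} copy of $q$; the paper needs a further argument via the periodicity lemma of Fine and Wilf (Lemma~\ref{lem:fw}) to show $\by_j\in q^\ast$, and the contradicting witnesses it then produces, such as $xy_Iy_jw\by_j\by_I\bx$, use two \emph{different} loops once each --- a shape that your ``smallest single loop, at most twice'' scheme never generates.
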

\begin{proof}
Let $N$ be the number of nonterminals of $G$.
Let $\sigma\in L$ be a shortest word, and $\alpha\in L$ a shortest word with $\biglcp L = \sigma \lcp\alpha$.
Set $\pi:=\biglcp L$.

We claim that there is at least one such $\alpha$ (for any fixed $\sigma$) that has an derivation tree w.r.t.\ $G$ of height less than $4N$.If $\sigma=\alpha$, we are done as $\sigma$ has a derivation tree of height less than $N$.
So assume $\sigma\neq \alpha$ s.t.\ $\sigma = \pi a \sigma'$ and $\alpha = \pi b\alpha'$ with $a\neq b$ and $a,b\in\al$. Then fix any derivation tree $t$ of $\alpha$ w.r.t.\ $G$.

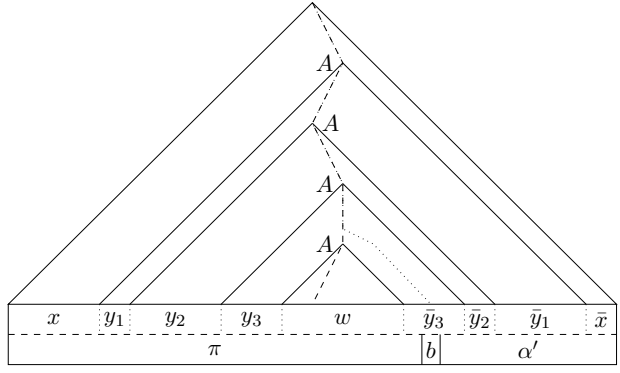
\begin{figure}
\begin{center}
\scalebox{0.8}{
\begin{tikzpicture}
\draw (0,0) -- (-5,-5) -- (5,-5) -- (0,0);
\draw[dashed] (0,0) -- (0.5,-1) -- (-0,-2) -- (0.5,-3) -- (0.5,-4) -- (0,-5);
\draw[dotted] (0,0) -- (0.5,-1) -- (-0,-2) -- (0.5,-3) -- (0.5,-3.75) -- (1,-4) -- (1.95,-5);
\draw (0.5,-1) -- (4.5,-5);
\draw (0.5,-1) -- (-3.5,-5);
\draw (-0,-2) -- (-3,-5);
\draw (0,-2) -- (3,-5);
\draw (0.5,-3) -- (-1.5,-5);
\draw (0.5,-3) -- (2.5,-5);
\draw (0.5,-4) -- (-0.5,-5);
\draw (0.5,-4) -- (1.5,-5);

\node at (0.5-0.3,-1) {$A$};
\node at (-0+0.3,-2) {$A$};
\node at (0.5-0.3,-3) {$A$};
\node at (0.5-0.3,-4) {$A$};

\draw (-5,-5) -- (-5,-6) -- (5,-6) -- (5,-5);
\draw[dashed] (-5,-5.5) -- (5,-5.5);

\node at (-4.25,-5.25) {$x$};
\draw[dotted] (-3.5,-5) -- (-3.5,-5.5);
\node at (-3.25,-5.25) {$y_1$};
\draw[dotted] (-3,-5) -- (-3,-5.5);
\node at (-2.25,-5.25) {$y_2$};
\draw[dotted] (-1.5,-5) -- (-1.5,-5.5);
\node at (-1,-5.25) {$y_3$};
\draw[dotted] (-0.5,-5) -- (-0.5,-5.5);
\node at (0.5,-5.25) {$w$};
\draw[dotted] (1.5,-5) -- (1.5,-5.5);
\node at (2,-5.25) {$\by_3$};
\draw[dotted] (2.5,-5) -- (2.5,-5.5);
\node at (2.75,-5.25) {$\by_2$};
\draw[dotted] (3,-5) -- (3,-5.5);
\node at (3.75,-5.25) {$\by_1$};
\draw[dotted] (4.5,-5) -- (4.5,-5.5);
\node at (4.75,-5.25) {$\bx$};

\draw (1.8,-6) -- (1.8,-5.5);
\draw (2.1,-6) -- (2.1,-5.5);
\node at (1.95,-5.75) {$b$};
\node at (-1.6,-5.75) {$\pi$};
\node at (3.55,-5.75) {$\alpha'$};

\end{tikzpicture}
}
\end{center}
\caption{Factorization of a witness $\alpha=(x,\bx)(y_1,\by_1)(y_2,\by_2)(y_3,\by_3)w=\pi b\alpha'$ w.r.t.\ a nonterminal $A$ occurring at least four times a long the dashed path in a derivation tree of $\alpha$ leading to a letter either within the lcp $\pi=\biglcp L$ or to the lcp-defining letter $b$ (the leaf of the dotted path).}
\label{fig:lcp-height}
\end{figure}

In fact, we will show the stronger claim that any path from the root of $t$ to any letter of $\pi b$ has length at most $3N$ (i.e.\ all the paths leading to the separating letter $b$ or a letter left of it, see Figure~\ref{fig:lcp-height}); note that any path that leads to a letter right of $b$ (i.e.\ into $\alpha'$) has to enter a subtree of height less than $N$ as soon as it leaves the path leading to $b$ because of the minimality of $\alpha$.
Hence, if all the paths leading to $b$ or a letter left of $b$ have length less than $3N$, the longest path in the derivation tree must have length at most $4N$.

So assume for the sake of contradiction that there is a path leading to a letter within $\pi b$ that has at least length $3N$ i.e.\ consists of at least $3N+1$ nonterminals. Then there is one nonterminal $A$ that occurs at least four times leading to a factorization
\[
\alpha = (x,\bar{x})(y_1,\bar{y}_1)(y_2,\bar{y}_2)(y_3,\bar{y}_3)w
\]
Note that $x\bx\neq \ew$, $y_i\by_i\neq \ew$ ($i = 1,2,3$), and $w\neq \ew$ as $G$ is proper.
As this path ends at $b$ or left of it, we have $xy_1y_2y_3 \noq \pi$. With $(x,\bx)(y_i,\by_i)(y_j,\by_j)w \in L$ for any $i,j\in\{1,2,3\}$ we thus obtain that $xy_iy_j\noq \pi$ and $xy_jy_i \noq \pi$ and thus $y_iy_j=y_jy_i$ for all $i,j\in\{1,2,3\}$. So $y_i = p^{k_i}$ for the same primitive $p$ using Lemma~\ref{lem:com}.

Let $L' = (x,\bar{x})[(y_1,\bar{y}_1)+(y_2,\bar{y}_2)+(y_3,\bar{y}_3)]^\ast w$ so that $\{xw\bx,\alpha\}\subseteq L'$.
By construction $L'\subseteq L$ and thus $\biglcp L \noq \biglcp L' \noq xw\bx \lcp \alpha$.
As $xw\bar{x}$ is shorter than $\alpha$, it cannot be a witness, so $\pi a \noq xw\bar{x}$ and $\pi = xw\bar{x}\lcp \alpha$. Hence
\[
\biglcp L = \sigma\lcp \alpha = \pi = xw\bar{x}\lcp \alpha \noqr \biglcp L' \noqr \biglcp L \ie \biglcp L = \biglcp L'
\]
It therefore suffices to consider $L'$ in the following; in particular, $\alpha$ has to be a witness w.r.t.\ $xw\bx$ of minimal length, too. (From here on, {\em witness} will always be w.r.t.\ $xw\bx$.)
By virtue of Theorem~\ref{thm:lcp-pump} we have
$\biglcp L' = \biglcp (x,\bar{x})[(y_1,\bar{y}_1)^{\le 2}+(y_2,\bar{y}_2)^{\le 2}+(y_3,\bar{y}_3)^{\le 2}] w$.
Note that $\biglcp L' \no xw\bar{x} \lcp xy_iw\bar{y}_i\bar{x}$ for any $i = 1,2,3$
as $\abs{xy_iw\bar{y}_i\bar{x}}< \abs{\alpha}$ and thus $xy_iw\by_i\bx$ cannot be a witness by minimality of $\alpha$.
So for some $I\in\{1,2,3\}$
\[
\biglcp L' = xw\bar{x}\lcp xy_Iy_Iw\bar{y}_I\bar{y}_I\bar{x} \noq \alpha
\]
i.e.\ $xy_Iy_Iw\bar{y}_I\bar{y}_I\bar{x}$ has to be also a witness.
Set $(y,\bar{y}):=(y_I,\bar{y}_I)$ and 
$L''=(x,\bar{x})(y,\bar{y})^\ast w$
so that $L''\subseteq L' \subseteq L$ and $\biglcp L = \biglcp L' = \biglcp L''$ as
\[
xw\bar{x}\lcp xyyw\bar{y}\bar{y}\bar{x} = \biglcp L \noq \biglcp L' \noq \biglcp L'' \noq xw\bar{x} \lcp xyyw\bar{y}\bar{y}\bar{x} \no xyw\by\bx
\]
As $xyw\by\bx$ is not a witness, Theorem~\ref{thm:lcp-pump} tells us that there is some $q$ satisfying
\[
yw=wq^k \wedge 
q\by \neq \by q \wedge
\biglcp L = \biglcp L'' = xw\bar{x} \lcp xywq\bar{y}\bar{x} 
\wedge xwq^k(\by\lcp q^\omega)\noq \biglcp L \no xw q^{k+1} (\by\lcp q^\omega)
\]
From this, we obtain:
\textbf{1.} As we already know that $y_i = p^{k_i}$ (as they commute), it follows that $p$ and $q$ are conjugates with $pw=qw$ s.t.\ $y_iw= wq^{k_i}$.
\textbf{2.} As $xwq^k\noq \biglcp L \no xwq^\omega$, we find some $m\ge 0$ and $\dq \no q$ s.t.\ $\pi =\biglcp L = xwq^kq^m\dq$ and, thus, $\pi a = xwq^kq^m\dq a \noq xw\bx$ and $\pi b = xwq^kq^m\dq b\noq xyyw\by\by \bx$. (Here, $b$ might change, yet it cannot become $a$ as $xyyw\by\by\bx$ is a witness.)
Additionally, from $\pi = xw\bx \lcp xyyw\by\by\bx \no xwq^{k+1} (\by\lcp q^\omega)$ we obtain $\pi c \noq xwq^{k+1}(\by \lcp q^\omega)$, i.e.\ $q^m \dq c \noq q\by \lcp q^\omega \no q^\omega$ and thus $\dq c \noq q$.
Hence, any word with prefix $xwq^{k+1}(\by \lcp q^\omega)$ is a witness.

\noindent 
If there was at least one $j\in\{1,2,3\}\setminus\{I\}$ with $k_j>0$ s.t.\ $y_j = p^{k_j} \neq \ew$, then  $(x,\bx)(y_j,\by_j)(y,\by)w$ would be a witness shorter than $\alpha$ as $y_j$ would give us at least one copy of $q$:
\[
\begin{array}{lcl@{\hspace{3mm}}l}
(x,\bx)(y_j,\by_j)(y,\by)w  & = & xy_j y w\by\by_j \bx\\
& \noqr & xwq^{k+k_j}\by & \text{(as $yw=wq^k$ and $y_jw=wq^{k_j}$)}\\ 
& \noqr & xwq^{k+k_j}(\by\lcp q^\omega) &\\ 
& \noqr & xwq^{k+1}(\by\lcp q^\omega) & \text{(as $k_j>0$ and $q^{k+1}(\by\lcp q^\omega)\no q^\omega$)}\\ 
\end{array}
\]
So for all remaining $j\in\{1,2,3\}\setminus\{I\}$ we have $y_j=\ew$ and thus
$\by_j \neq \ew$ as $G$ is proper and thus $y_j\by_j\neq \ew$. 
By Lemma~\ref{lem:lcp-reg-1} $\biglcp xw\by_j^\ast \bx = xw\bx \lcp xw\by_j \bx$, hence 
$\pi a \noq xw\by_j^\ast \bx$, i.e.\ $q^{k+m}\dq a \noq \by_j^\omega$.
If  $q^{m}\dq b\noq \by_j$ for some $j\in\{1,2,3\}\setminus\{I\}$ (recall $\dq b \noq q$), then as $a\neq b$
\[
xw\bx\lcp (x,\bx)(y,\by)(y_j,\by_j)w \stackrel{\text{(as $y_j=\ew$)}}= xw(\bx\lcp q^k\by_j\by\bx) = xw(q^{k+m}\dq a\lcp q^{k+m}\dq b) = \pi
\]
i.e.\ $xyy_jw\by_j\by\bx$ would be a shorter witness than $\alpha$.
Hence $\by_j \noq q^{m}\dq \no q^{k+m}\dq a$ for both $j\in\{1,2,3\}\setminus\{I\}$. Thus:
\[
\abs{q^\omega\lcp \by_j^\omega} \ge
\abs{q^{k+m}\dq} \ge
\abs{q} + \abs{q^m\dq} > 
\abs{q}+\abs{\by_j} - \gcd(\abs{q},\abs{\by_j}) 
\]
By the periodicity lemma of Fine and Wilf (Lemma~\ref{lem:fw}) this implies $\by_j = q^{k'_j}$ for some $k'_j>0$ (as $q$ primitive), and, subsequently as the final contradiction, that $xy_Iy_jw\bar{y}_j\bar{y}_I\bar{x}$ would be a shorter witness.
\end{proof}

\section{Small Equivalent Subsets of Languages}\label{sec_rep}

In this section we formally introduce a notion of equivalence of languages w.r.t.\ longest common prefixes.
The first main result of this section is that every non-empty language has an equivalent subset consisting of at most three elements.
In case of acyclic context-free languages, such a subset can be computed in polynomial time.
In combination with Theorem \ref{thm:witness-height}, we can lift the restriction on acyclicity.
This enables us to ultimately conclude that the longest common prefix of a context-free language
can be computed in polynomial time.

\begin{definition}\label{d:equiv}
Two languages $L,L'$ are \emph{equivalent w.r.t the \txtLCP (short: $L\equiv L'$)} iff
$\biglcp(Lw)=\biglcp(L'w)$ for all words $w\in \als$.
\end{definition}

\noindent
We observe that $L$ is equivalent to $L'$ w.r.t.\ the \txtLCP also after
union or concatenation from the left or right with arbitrary other languages.
Formally, this amounts to the following properties:

\begin{lemma}\label{equiv}
For all non-empty languages $L,L',\hat{L}$ with $L\equiv L'$ we have:
\[
\textbf{\textsf{1.}}\, \biglcp (L \hat L) = \biglcp (L' \hat L) \quad
\textbf{\textsf{2.}}\, \biglcp (\hat L L) = \biglcp (\hat L L') \quad
\textbf{\textsf{3.}}\, \biglcp (L \cup\hat L) = \biglcp (L' \cup\hat L)
\]
\end{lemma}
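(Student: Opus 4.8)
The plan is to reduce all three properties to two facts about the \txtLCP that are already available: concatenation distributes from the left, so that $\biglcp(uL)=u\,(\biglcp L)$ (the language version of the stated identity $u(v\lcp w)=uv\lcp uw$); and $\biglcp$ commutes with unions, in the sense that $\biglcp\bigl(\bigcup_{i}S_i\bigr)=\biglcp_i\biglcp S_i$ for any family of non-empty sets $S_i$. The second is the extension of the stated binary identity $\biglcp(L\cup L')=(\biglcp L)\lcp(\biglcp L')$ to arbitrary index sets; it holds because a word $p$ is a prefix of every element of $\bigcup_i S_i$ iff $p\noq\biglcp S_i$ for each $i$, so the greatest such $p$ is exactly $\biglcp_i\biglcp S_i$.

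First I would record the single most useful consequence of $L\equiv L'$: instantiating the definition at $w=\ew$ gives $\biglcp L=\biglcp L'$. Properties \textbf{\textsf{2.}} and \textbf{\textsf{3.}} then need nothing beyond this. For \textbf{\textsf{3.}}, the binary union identity gives $\biglcp(L\cup\hat L)=(\biglcp L)\lcp(\biglcp\hat L)=(\biglcp L')\lcp(\biglcp\hat L)=\biglcp(L'\cup\hat L)$. For \textbf{\textsf{2.}}, I would group by the left factor, $\hat L L=\bigcup_{u\in\hat L}uL$, so that by the union identity and left-distributivity $\biglcp(\hat LL)=\biglcp_{u\in\hat L}\biglcp(uL)=\biglcp_{u\in\hat L}u\,(\biglcp L)$; replacing $\biglcp L$ by $\biglcp L'$ yields $\biglcp(\hat LL')$.

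The substantive case is \textbf{\textsf{1.}}, and it is the only one that uses the full strength of $\equiv$ (that the defining equation holds for \emph{all} $w$, not merely $w=\ew$). Here I would instead group by the right factor: $L\hat L=\bigcup_{v\in\hat L}Lv$. The union identity gives $\biglcp(L\hat L)=\biglcp_{v\in\hat L}\biglcp(Lv)$, and now the hypothesis $L\equiv L'$ applies termwise with the word $w:=v$, so $\biglcp(Lv)=\biglcp(L'v)$ for every $v\in\hat L$. Hence $\biglcp(L\hat L)=\biglcp_{v\in\hat L}\biglcp(L'v)=\biglcp(L'\hat L)$.

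I expect the only genuine obstacle to be bookkeeping rather than mathematics: one must check that the union identity is valid for the possibly infinite families indexed by $\hat L$ (the paper states only the binary case), and one must invoke it in the right orientation — grouping $L\hat L$ by the right factor $v$ in \textbf{\textsf{1.}} so that the inner sets $Lv$ are exactly the languages to which $L\equiv L'$ speaks, while grouping $\hat LL$ by the left factor $u$ in \textbf{\textsf{2.}} so that left-distributivity pulls $u$ out and leaves the single value $\biglcp L$. With these groupings fixed, each part reduces to a one-line computation.
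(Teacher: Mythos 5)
Your proposal is correct and takes essentially the same route as the paper: part \textbf{\textsf{1.}} by grouping $L\hat L$ as $\bigcup_{v\in\hat L}Lv$ and applying the hypothesis termwise, part \textbf{\textsf{2.}} by pulling out the left factor so that only $\biglcp L=\biglcp L'$ (the $w=\ew$ instance) is needed, and part \textbf{\textsf{3.}} via the binary union identity. Your version merely makes explicit the infinite-union identity and the choice of groupings that the paper's one-line computations use implicitly.
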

\begin{proof}
 The argument is as follows:
\begin{enumerate}
 \item $\begin{array}{lllllll}
\biglcp (L \hat L) &=&
\biglcp_{w\in\hat L}(\biglcp (L w)) &=&
\biglcp_{w\in\hat L}(\biglcp (L' w)) &=&
\biglcp (L' \hat L);
\end{array}$

\item $\begin{array}{lllllll}
 \biglcp (\hat{L} L) &=& \biglcp (\hat{L} (\biglcp L)) &=& \biglcp (\hat{L} (\biglcp L')) &=& \biglcp (\hat{L} L');
\end{array}$

\item $\begin{array}{lllllll}
 \biglcp (L \cup \hat{L}) &=& \biglcp L \lcp \biglcp \hat{L} &=& \biglcp L' \lcp \biglcp \hat{L} 
 &=& \biglcp (L' \cup \hat {L}).
\end{array}$
\end{enumerate}
\end{proof}

\ignore{
\begin{quote}
Fact: $L\equiv L\cup\{\biglcp L\}$
\end{quote}

\begin{quote}
Lemma: Let $L\subseteq \als$ with $\abs{L}\ge 2$. Set $u_L=\biglcp L$. Then there exists a unique $w_L\in \als\cup\al^\omega$ such that $\forall v\in \als\cup \al^\omega\colon \biglcp(Lv) = u_L(w_L\lcp v)$. 

Moreover, there exist two words $u_Lv_1,u_Lv_2\in L$ s.t.\ $w_L= v_1^\omega\lcp v_2^\omega$.
\end{quote}
\begin{quote}
Proof: Wlog.\ $u_L\in L$ as $L\equiv L\cup\{u_L\}$.

Pick any non-empty $v\in u_L\setminus L:=\{v\colon u_Lv\in L\}$ and consider $\{u_L,u_Lv\}$. In order to compute a longest word $w\in\als\cup \al^\omega$ s.t.\ $\biglcp\{u_Lw,u_Lvw\}=u_Lw$ we conclude that $w=vw'$ leading to $\biglcp\{uw,uvw\}=\biglcp\{u_Lvw',u_Lvvw'\}=u_Lvw'$ and, thus inductively to $w=v^\omega$.

Pick now any two distinct non-empty $v_1,v_2\in u_L\setminus L$ and consider $\{u_L,u_Lv_1,u_Lv_2\}$. By the preceding observation we immediately obtain that the longest word $w$ s.t.\ $\{u_L,u_Lv_1w,u_Lv_2w\} = u_Lw$ is given by $w=v_1^\omega\lcp v_2^\omega$ as 
\[
\begin{array}{cl}
   & \biglcp\{u_Lw,u_Lv_1w,u_Lv_2w\}\\[1mm]
 = & (\biglcp\{u_L,u_Lv_1w\})\lcp (\biglcp\{u_L,u_Lv_2w\})\\[1mm]
 = & (u_L(v_1^\omega\lcp w))\lcp (u_L(v_2^\omega\lcp w))\\[1mm]
 = & u_L((v_1^\omega\lcp v_2^\omega)\lcp w)
\end{array}
\]
Generalizing this argument, 
we conclude $w_L=\biglcp\{ v^\omega \colon v\in u_L\setminus L,v\neq\ew\}$.

Note that $v_1^\omega=v_2^\omega$ if and only if $v_1v_2=v_2v_1$ with $v_1^\omega\lcp v_2^\omega = v_1v_2\lcp v_2v_1$ if $v_1^\omega\neq v_2^\omega$. We therefore conclude that either $w_L=v^\omega$ for any non-empty $v\in u_L\setminus L$ (in the case that all words of $u_L\setminus L$ commute i.e.\ all are powers of the same primitive word) or $w_L=v_1^\omega\lcp v_2^\omega$ for some non-empty $v_1,v_2 \in u_L\setminus L$ (in the case that at least two words of $u_L\setminus L$ do not commute; then we pick any two non-commuting words $v_1,v_2\in u_L\setminus L$ such that $v_1v_2\lcp v_2v_1$ is minimal).
\end{quote}
}
\ignore{
In the next lemma we summarize some technical properties regarding the extension of the \txtLCP of a language $L$.
\begin{lemma}\label{l:longest}
For every language $L$ of cardinality at least 2 and \txtLCP $u$, let $W$ denote the set of all finite or infinite words $w$ with
$\biglcp(Lw) =uw$.
Then the following holds:
\begin{enumerate}
\item	If $w\in W$ then also $w'\in W$ for every prefix of $w$;
\item	Every two elements in $W$ are pairwise comparable;
\item 	$W$ has a unique maximal element $w_L$;
\item	$w= w_L$ iff
	for all $v\in\Sigma^*$, $\biglcp(Lwv) = uw$;
\item	For every word $w\in\Sigma^*\cup\Sigma^\omega$,
	$\biglcp (L w) = u(w\lcp w_L)$.
\end{enumerate}
\end{lemma}

\noindent
In the lemma, we have assumed that for every $w\in\Sigma^\omega$ and $\alpha\in\Sigma^*\cup\Sigma^\omega$,
$w\alpha = w$ holds.
\begin{proof}
\tcr{Immediate consequences of the preceding lemma? What am I doing wrong here?}

Assume that $w\in W$ and $w'$ is a prefix of $w$. Then $w=w'v$ for some worde $v\in\Sigma^*\cup\Sigma^\omega$.
Since $uw'v = \biglcp(Lw'v)\sqsubseteq (biglcp(Lw'))v$, we conclude that $uw'\sqsubseteq\biglcp(Lw')$.
Since also $\biglcp(Lw') \sqsubseteq(\biglcp L)w' = uw'$, we have $uw'=\biglcp(Lw')$, i.e., $w'\in W$ and statement (1)
follows.

For a proof of the second statement,
consider a word $u_0\in L$ of minimal length.
If $u_0$ is not the \txtLCP of $L$, then
$W=\{\epsilon\}$ and the claim follows.
Otherwise, consider another word from $L$, which thus is of the form
$u_0 v$ for some $v\in\Sigma^+$.
Assume for a contradiction that there are incomparable words in $W$, i.e., $w'a\alpha,w'b\beta\in W$
for distinct $a,b\in\Sigma$.
Then there are $v_1,v_2$ so that 
\[
\begin{array}{lll}
u_0v\, w'a\alpha	&=& u_0\,w'a\alpha\,v_1\qquad\text{and}	\\
u_0v\, w'b\beta	&=& u_0\,w'b\beta\,v_2\qquad
\end{array}
\]
holds. Accordingly, there are $v'_1,v'_2$ such that
\[
\begin{array}{lll}
v\, w'a	&=& w'a\,v'_1\qquad\text{and}	\\
v\, w'b	&=& w'b\,v'_2.\qquad
\end{array}
\]
If $w'$ were a proper prefix of $v$, then both $w'a$ and $w'b$ are also prefixes of $v$ ---
which is not possible. 
If $w' = v$, then (by left cancellation), 
$w'a = a v'_1$ and $w'b = b v'_2$. Since $a\neq b$, it follows that $w' = \epsilon$ --- in contradiction
to the assumption that $v\neq\epsilon$.
Therefore, $v$ must be a proper prefix of $w'$.
By theorem 4 of \cite{Karhumaeki}, the last two equations therefore imply that there are 
$p_1,q_1,p_2,q_2\in\Sigma^*$ so that $v = p_1q_1 = p_2q_2$, and
$w'a = p_1(q_1p_1)^n$ and $w'b = p_2(q_2p_2)^n$ for the same $n\geq 1$.
As $q_1p_1$ has the same length as $q_2p_2$ (namely, the length of $v$),
also $p_1$ and $p_2$ must have the same length.
Since $p_1$ and $p_2$ are both prefixes of $v$, they must be equal ---
implying that also $q_1=q_2$ and thus $w'a = w'b$ --- in contradiction to $a\neq b$.
Therefore, our initial assumption was wrong, and the second statement follows.
For a proof of the third statement we first remark that $W\neq\emptyset$, since $\epsilon\in W$.
If $W$ is finite then (because of statement 2), there is a unique maximal element.
If $W$ is infinite, then (again due to statement 2) all finite elements in $W$ can be considered as 
prefixes of the same unique infinite word $\alpha$. 
In order to prove that $\alpha\in W$ holds,
consider any word $v\in L$ which is not equal to $u$. 
Then $v=uv'$ for some word $v'\neq\epsilon$. Since every prefix $\alpha'$ of $\alpha$ is in $W$,
$u\alpha'$ must a prefix of $uv'\alpha'$.
This is only possible if $\alpha=v_1^\omega$, and $L\subseteq
u (v_1^*\cup v_1^\omega)$ for the root $v_1$ of $v'$. In this case, however,
$\biglcp(L\alpha) = u\alpha$, implying that $\alpha\in W$ holds.
As a consequence, $W$ always contains a \emph{unique} maximal element $w_L$ --- in accordance to statement (3).

For a proof of statement (4), we only consider the case where $w_L$ is finite.
For a contradiction, assume that there exists some $v\in\Sigma^*$ so that 
$\biglcp(Lw_Lv) = uw_Lv'$ for some $v'\neq\epsilon$. Then $v=v'\bar v$,
and for every word $u'\in L$, we have that $uw_Lv'$ is a prefix of $u'w_Lv'\bar v$.
As $u$ is already a prefix of $u'$, the length of $uw_Lv'$ is at most equal to the length
of $u'w_Lv'$. Therefore, $uw_Lv'$ must already be a prefix of $u'w_Lv'$, i.e., $\bar v$ is irrelevant
for the computation of the longest common prefix. 
It follows that the longest common prefixes of $Lw_Lv$ and $Lw_Lv'$ coincide.
Hence, $\biglcp (Lw_L v') = uw_Lv'$ --- implying that $w_L v'\in W$ --- contradicting the maximality of $w_L$.

For the reverse implication of statement (4), consider a word $w$ so that for all $v\in\Sigma^*$,
$\biglcp (Lwv) = uw$. By choosing $v=\epsilon$, we find that $w$ must be an element of $W$.
Now assume for a contradiction that $w$ is not maximal in $W$. Then there is a letter $a\in\Sigma$ so that
$wa\in W$ as well. Consequently, $\biglcp (Lwa) = uwa$ as well as $biglcp (Lwa) = uw$ --- contradiction.

Finally, consider statement (5). By statements (1) and (4), the last statement is already proven to hold in case 
that $w$ and $w_L$ are comparable.
Let $w'$ denote the maximal common prefix of $w$ and $w_L$.
Since $w'$ is a prefix of $w_L$, we know that $w'\in W$ and thus,
$\biglcp(Lw') = uw'$. In particular, $uw'$ is also a prefix of $Lw$.
We claim that it is the longest prefix.
Assume for a contradiction that $Lw$ has a longer prefix, namely, $w'a$ for some $a\in\Sigma$.
In particular then $w_L=w'b w_1$ for some $b\in\Sigma$ with $b\neq a$.
Let $u'\in L$ different from $u$. Then in particular, $u' = u v$ for some word $v\neq\epsilon$.
Since $w'b\in W$, we know that $uw'b$ is a prefix of $u'w'b$ and even of $u'w'$ since $u$ is shorter than $u'$.
But then $uw'a$ cannot be a prefix of $u'w'$ and thus also not a prefix of $u'w$.
Consequently, the longest common prefix of $Lw$ is $uw'$ --- according to statement (5).
\end{proof}

\noindent
In light of lemma \ref{l:longest}, we call the word $w_L$ from statement 3,
the \emph{longest extension} of the \txtLCP of the language $L$.

For example, for $L = \{ac, acab, acabab\}$ the longest extension is given by $w_L = (ab)^\omega$.
Likewise, for $L' = \{\epsilon, ac, acab, acabab\}$, we have $w_{L'} = aca$.
As a consequence of lemma \ref{l:longest}, we obtain that equivalence of languages of cardinality at least 2 
can be uniquely characterized by their lcps and their longest extensions of the lcps. We have:

\begin{corollary}\label{c:equiv-char}
Assume that $L$ and $L'$ are languages of cardinalities at least 2. Then
the following two statements are equivalent:
\begin{enumerate}
\item	$L\equiv L'$;
\item	$\biglcp L = \biglcp L'$ and $w_L = w_{L'}$.
\end{enumerate}
\end{corollary}

\begin{proof}
Let $u = \biglcp L = \biglcp L'$. 
Then any word $w\in\Sigma^*\cup\Sigma^\omega$,
\[
\biglcp (Lw) =
\biglcp (L(w\lcp w_L)) = u (w\lcp w_L) = 
\biglcp (L'(w\lcp w_{L'})) = 
\biglcp (L'w)
\]
\end{proof}
}

\ignore{
\begin{lemma}\label{lem_finiteRep}
Each non-empty language $L\subseteq\Sigma^*$  with \txtLCP $u$,
has an equivalent language $L'\subseteq\Sigma^*$ of one of the following forms:
\begin{enumerate}
\item	$\{u\}$, if $L$ is already a singleton language; otherwise,
\item	$\{u,uv\}$, if $w_L = v^\omega$ for the primitive period $v$;
\item	$\{u,u w_La, u w_L b\}$ for distinct letters $a,b\in\Sigma$
		if $w_L$ is finite. 
\end{enumerate}
\end{lemma}
\begin{proof}
W.l.o.g.\ assume that $L$ has at least two elements.
By definition of $L'$, $\biglcp L = \biglcp L'$ holds in all three cases.
It remains to verify that also $w_L = w_{L'}$.
First assume that $L$ is ultimately periodic, i.e., $L\subseteq u(v^*\cup v^\omega)$ for some $v\in\Sigma^+$.
Then $w_L = v^\omega$, and also $w_{L'} = w_L$.
Otherwise, $L' = \{u,uw_La,uw_Lb\}$. Therefore for all words $v\in\Sigma^*$, 
$\biglcp L'w_Lv = \biglcp\{uw_Lv,uw_Law_Lv,uw_Lbw_Lv\} =uw_L$. Therefore (by statement 4 of lemma \ref{l:longest}),
$w_{L'} = w_L$.
\end{proof}
}

\noindent
The next lemma gives us an explicit formula for $\biglcp (Lw)$ for the special case
of the two-element language $L=\{u,uv\}$.
\begin{lemma}\label{l:omega}
Assume that $u,v\in\Sigma^*$ with $v\neq\epsilon$.
For all words $w \in \Sigma^*$, $\biglcp(\{u, uv\} w) = u (w\lcp v^\omega)$ holds.
\end{lemma}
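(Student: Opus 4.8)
The plan is to reduce the statement to Lemma~\ref{lem:lcp-reg-1} by first factoring out the common prefix $u$. First I would rewrite the concatenation as a two-element set, $\{u,uv\}w = \{uw,uvw\}$, and then apply the left-distributivity of concatenation over the \txtLCP recorded in the Preliminaries (namely $u(v\lcp w)=uv\lcp uw$) to obtain
\[
\biglcp(\{u,uv\}w) = uw\lcp uvw = u(w\lcp vw).
\]
This step does all the bookkeeping: the prefix $u$ is pulled out completely, so the entire content of the lemma is now concentrated in the single expression $w\lcp vw$, which no longer involves $u$ at all.

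Next I would invoke Lemma~\ref{lem:lcp-reg-1} directly. Since $v\neq\ew$ by hypothesis, that lemma (instantiated with $y:=v$ and $i:=1$) yields $w\lcp vw = w\lcp v^\omega$. Substituting this equality into the display above immediately gives
\[
\biglcp(\{u,uv\}w) = u(w\lcp v^\omega),
\]
which is exactly the claimed formula. Thus the proof is essentially a two-line computation: one application of left-distributivity followed by one application of Lemma~\ref{lem:lcp-reg-1}.

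I do not expect any real obstacle here, as the statement is little more than a repackaging of Lemma~\ref{lem:lcp-reg-1} once the common prefix $u$ has been removed. The only point deserving a moment's care is the hypothesis $v\neq\ew$: it is precisely this condition that Lemma~\ref{lem:lcp-reg-1} requires in order to identify $w\lcp vw$ with $w\lcp v^\omega$, and it is also what makes the infinite word $v^\omega$ meaningful. With $v=\ew$ the right-hand side would be ill-defined and the identity would degenerate to $\biglcp(\{u\}w)=uw$, so the nonemptiness of $v$ is genuinely used rather than cosmetic. Verifying the argument therefore amounts only to confirming that the distributive law and the hypotheses of Lemma~\ref{lem:lcp-reg-1} apply as stated.
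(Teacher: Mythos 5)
Your proof is correct, but it follows a genuinely different (and shorter) route than the paper's. The paper proves Lemma~\ref{l:omega} by a self-contained case analysis: after writing $\biglcp(\{u,uv\}w)=uw\lcp uvw$, it distinguishes the case that $w$ and $v$ are incomparable or $w\noq v$, where $w\lcp vw=w\lcp v=w\lcp v^\omega$, from the case $v\noq w$, where it factorizes $w=v^iw'$ with $v\not\noq w'$ and finishes by direct computation carrying $u$ along. That case distinction is in substance the same factorization argument that already proves Lemma~\ref{lem:lcp-reg-1} in the Preliminaries (there with $w=y^ky'w'$ along $y^\omega$); the paper simply re-derives the needed instance inline instead of citing the earlier lemma. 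You instead pull $u$ out by left-distributivity and then quote Lemma~\ref{lem:lcp-reg-1} with $y:=v$ and $i:=1$, which is legitimate: that lemma is proved in the Preliminaries without any reference to Lemma~\ref{l:omega}, so there is no circularity, and its hypothesis $y\neq\ew$ is exactly your assumption $v\neq\ew$. What your route buys is brevity and an explicit display of the logical dependency --- Lemma~\ref{l:omega} is nothing more than left-distributivity plus Lemma~\ref{lem:lcp-reg-1}; what the paper's version buys is self-containment, so that the reader of Section~\ref{sec_rep} sees the underlying factorization of $w$ along powers of $v$ spelled out rather than hidden inside a citation.
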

\begin{proof}
 $\biglcp(\{u, uv\} w) = uw\lcp uvw$. If $w$ and $v$ are incomparable or
$w$ is a prefix of $v$, 
$w\lcp vw= w\lcp v = w\lcp v^\omega$, and the claim follows.
 Thus, it remains to consider the case that $v \sqsubseteq w$. Then $w = v^i w'$ for some $i$
so that $v$ is no longer a prefix of $w'$. Then
 $\biglcp(\{u, uv\} w) = \biglcp(\{u, uv\} v^iw') = 
uv^i(w'\lcp vw') = uv^i(w'\lcp v^\omega) = u(w\lcp v^\omega)$.
\end{proof}

\noindent
The explicit formula from Lemma \ref{l:omega} can be used to identify small equivalent sublanguages.

\begin{theorem}\label{t:three}
For every non-empty language $L\subseteq\Sigma^*$ there is a language $L' \subseteq L$ consisting of at most three words such that $L\equiv L'$.
\end{theorem}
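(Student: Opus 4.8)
The plan is to reduce equivalence to matching two invariants of a language, and then to realise both invariants inside $L$ using at most three of its words. Throughout assume $\abs{L}\ge 2$ (the singleton case is immediate with $L'=L$), put $u:=\biglcp L$, write every word of $L$ as $uv$, and let $V=\{v\in\als : uv\in L\}$ be the set of \emph{tails}, with $V_+=V\setminus\{\ew\}$ its non-empty members. By the choice of $u$ we have $\biglcp V=\ew$, and $V_+\neq\emptyset$ since $\abs{L}\ge 2$. Define the \emph{extension} $w_L:=\biglcp\{v^\omega : v\in V_+\}\in\als\cup\al^\omega$.

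The technical heart is the closed form
\[
\biglcp(Lw)=u\,(w\lcp w_L)\qquad\text{for all }w\in\als,
\]
which I will call $(\star)$. To prove it I would first pull out the common prefix, $\biglcp(Lw)=u\cdot\biglcp_{v\in V}(vw)$, and use the special case $u=\ew$ of Lemma~\ref{l:omega}, namely $w\lcp vw=w\lcp v^\omega$ for $v\neq\ew$. The lower bound $w\lcp w_L\noq\biglcp_{v\in V}(vw)$ is then routine: for every $v\in V_+$ one has $w\lcp w_L\noq w\lcp v^\omega=w\lcp vw\noq vw$, and $w\lcp w_L\noq w=vw$ in case $\ew\in V$; hence $w\lcp w_L$ is below $vw$ for all $v\in V$. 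For the reverse inequality the hypothesis $\biglcp V=\ew$ is essential. If $\ew\notin V$, then two tails with distinct first letters force $\biglcp_{v\in V}(vw)=\ew$, and correspondingly $w_L=\ew$, so both sides are $\ew$. If $\ew\in V$, then already $\biglcp_{v\in V}(vw)\noq w$; combining this with $\biglcp_{v\in V}(vw)\noq vw$ gives $\biglcp_{v\in V}(vw)\noq w\lcp vw=w\lcp v^\omega$ for each $v\in V_+$, and distributing $\lcp$ over the greatest lower bound yields $\biglcp_{v\in V}(vw)\noq w\lcp\biglcp_{v\in V_+}v^\omega=w\lcp w_L$, establishing $(\star)$.

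Formula $(\star)$ shows that any two languages of size at least two sharing the same $u$ and the same extension are equivalent w.r.t.\ the \txtLCP. It therefore suffices to exhibit $L'\subseteq L$ with $\abs{L'}\le 3$, $\biglcp L'=u$, and $w_{L'}=w_L$, and the construction is driven by the length of $w_L$. If $w_L$ is infinite, then all $v^\omega$ ($v\in V_+$) coincide, so by Lemma~\ref{lem:com} every tail in $V_+$ is a power of one primitive word and hence begins with the same letter; since $\biglcp V=\ew$ this forces $\ew\in V$, i.e.\ $u\in L$, and I take $L'=\{u,uv_1\}$ for any $v_1\in V_+$, so that $w_{L'}=v_1^\omega=w_L$. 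If $w_L=\ew$, then two tails $v_1,v_2\in V_+$ have distinct first letters and I take $L'=\{uv_1,uv_2\}$, whose \txtLCP is $u$ and whose extension is $v_1^\omega\lcp v_2^\omega=\ew$. If $w_L$ is finite and non-empty, every tail in $V_+$ starts with the first letter of $w_L$, so again $\biglcp V=\ew$ forces $u\in L$; picking $v_1,v_2\in V_+$ whose $\omega$-powers first disagree exactly at position $\abs{w_L}+1$ (such a pair exists because $w_L$ is the \emph{finite} greatest lower bound of the $v^\omega$) gives $v_1^\omega\lcp v_2^\omega=w_L$, and I take $L'=\{u,uv_1,uv_2\}$. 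In every case $\abs{L'}\le 3$ with $\abs{L'}\ge 2$, $\biglcp L'=u$, and $w_{L'}=w_L$, so $(\star)$ applied to both $L$ and $L'$ yields $L\equiv L'$.

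I expect the main obstacle to be $(\star)$ itself, and within it the upper bound $\biglcp_{v\in V}(vw)\noq w\lcp w_L$, where one must genuinely use $\biglcp V=\ew$ to prevent $\biglcp_{v\in V}(vw)$ from running past $w$. The remaining case analysis is essentially bookkeeping; its only delicate point is the observation that a non-empty $w_L$ (finite or infinite) forces the empty tail, i.e.\ $u\in L$ --- which is precisely why two words are insufficient there and a third word, namely $u$ itself, is needed to pin down $\biglcp L$ while two further words record the extension.
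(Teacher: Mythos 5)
Your proof is correct and takes essentially the same approach as the paper's: your identity $(\star)$ is the paper's Equation~\ref{eqn:lcpLw}, obtained as in your argument from Lemma~\ref{l:omega}, and your representative sets $\{u,uv_1\}$, $\{uv_1,uv_2\}$, $\{u,uv_1,uv_2\}$ are exactly the ones the paper selects. The only difference is organizational: you case-split on the length of the extension $w_L$ (infinite, empty, finite non-empty), which lets you absorb the case $u\notin L$ into the case $w_L=\ew$, whereas the paper splits first on whether $u\in L$ and then on ultimate periodicity --- a repackaging rather than a different proof.
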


\begin{proof}
If $L$ is a singleton language, we choose $L'=L$.
So assume that $L$ contains at least two words with \txtLCP $u$.
If the \txtLCP $u$ of $L$ is not contained in $L$ then we choose $L'$ as consisting of the two minimal words
$w_1, w_2$ so that $u = w_1 \lcp w_2$.
It remains to consider the case where the \txtLCP $u$ of $L$ is contained in $L$.
Then we have for each word $w\in\Sigma^*$,
\begin{equation}\label{eqn:lcpLw}
\begin{array}{llll}
\biglcp(Lw)	&=& \biglcp(\{uv\mid uv\in L\}w)	\\
		&=& \biglcp\{\biglcp(\{u,uv\}w)\mid uv\in L,v\neq\epsilon\}	\\
		&=& \biglcp\{u(w\lcp v^\omega)\mid uv\in L,v\neq\epsilon\}	& \text{(Lemma \ref{l:omega})}\\
		&=& u(w \lcp\biglcp\{v^\omega\mid uv\in L,v\neq\epsilon\})	\\
\end{array}
\end{equation}
If $L$ is ultimately periodic, then all words in $L$ are of the form $uv_0^i$ for some $v_0\in\Sigma^+$
and $i\geq 0$, and $(v_0^i)^\omega = v_0^\omega$. Thus, $\biglcp(Lw) = u(w \lcp v^\omega)$
for any $uv\in L$ with $v\neq\epsilon$. Hence,
$L\equiv L' = \{u,uv\}$ for any such $v$.

If $L$ is not ultimately periodic, then we choose
words $uv_1,uv_2\in L$ so that the \txtLCP of $v_1^\omega$ and $v_2^\omega$ has minimal length.
Then 
\[
\begin{array}{lll}
\biglcp(\{u,uv_1,uv_2\} w)		
		&=& u(w \lcp v_1^\omega\lcp v_2^\omega) 	\\
		&=& u(w \lcp\biglcp\{v^\omega\mid uv\in L,v\neq\epsilon\})	\\
\end{array}
\]
by the minimality of $v_1^\omega\lcp v_2\omega$. Therefore, $L \equiv L' = \{u,uv_1,uv_2\}$.
\end{proof}

\noindent
Since for any non-empty words $w_1,w_2$ given by SLPs, an SLP for  $w_1^\omega\lcp w_2^\omega = w_1w_2 \lcp w_2 w_1$ (if $w_1 \neq w_2$) can be computed in 
polynomial time\footnote{Lohrey \cite{Lohrey2012} gives an overview over the classical algorithms for SLPs. The fully compressed pattern matching problem for SLPs is in PTIME \cite[Theorem 12]{Lohrey2012}, i.e.\ we can test whether one SLP is a factor of another SLP. Especially we can test whether one SLP is a prefix of another SLP. As we can build an SLP
for any prefix of an SLP in polynomial time we can use a binary search to compute the \txtLCP of two SLPs in polynomial time.}, we have:
\begin{corollary}\label{c:finite}
 For every non-empty finite $L\subseteq\Sigma^*$ consisting of words each of which is represented by an SLP, a subset
 $L' \subseteq L$ consisting of at most three words can be calculated in polynomial time such that $L\equiv L'$.
\end{corollary}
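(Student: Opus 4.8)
The plan is to turn the existence proof of Theorem~\ref{t:three} into an algorithm, observing that every case distinction made there can be decided, and every witness word produced, by polynomially many applications of standard PTIME operations on SLPs. Throughout I use the following primitives, all available in polynomial time by the results surveyed in \cite{Lohrey2012} (cf.\ the preceding footnote): testing equality of two SLP-compressed words; testing whether one such word is a prefix of another; computing an SLP for the \txtLCP of two SLP-compressed words (by binary search on prefixes); computing an SLP for an arbitrary factor of an SLP-compressed word; and, for non-empty $w_1,w_2$, computing an SLP for $w_1^\omega\lcp w_2^\omega$. The last primitive is the only nontrivial one: by Corollary~\ref{cor:fw}, if $w_1w_2\neq w_2w_1$ then $w_1^\omega\lcp w_2^\omega=w_1w_2\lcp w_2w_1$ is a \emph{finite} word, computable as the \txtLCP of the two SLPs for $w_1w_2$ and $w_2w_1$; and $w_1w_2=w_2w_1$ (which we test by SLP equality) is exactly the case $w_1^\omega=w_2^\omega$, where the $\omega$-\txtLCP is that infinite word.

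First I would compute an SLP for $u=\biglcp L$ by folding the pairwise \txtLCP over the finitely many words of $L$, and then test whether $u\in L$ by comparing $u$ against each word of $L$ for equality; both take polynomially many PTIME operations. The algorithm then branches exactly as in the proof of Theorem~\ref{t:three}, so that correctness is inherited branch by branch. If $L$ is a singleton I return $L'=L$. If $u\notin L$, I fix any $w_1\in L$ and return $\{w_1,w_2\}$ for a word $w_2\in L$ minimizing $\abs{w_1\lcp w_2}$: since $u\noq w_1\lcp w_j$ for every $j$ while some witness $w_j$ (in the sense of the preliminaries) satisfies $w_1\lcp w_j=u$, this minimum equals $u$, so $w_1$ and $w_2$ diverge exactly at position $\abs{u}$; as every word of $L$ properly extends $u$, appending any $w$ preserves this divergence and thus $\biglcp(\{w_1,w_2\}w)=u=\biglcp(Lw)$. (Minimality of the chosen words, used in the proof of Theorem~\ref{t:three}, is not needed here.)

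If $u\in L$, I first compute, for every $uv\in L$ with $v\neq\ew$, an SLP for the factor $v$ by stripping the known prefix $u$. Picking one such $v_1$, I test $v_1v_j=v_jv_1$ for all $j$: if all these equalities hold, then all the $v$ share a common primitive root, $L$ is ultimately periodic, and I return $\{u,uv_1\}$; otherwise I compute $v_i^\omega\lcp v_j^\omega$ for all pairs, select a pair $(i^\ast,j^\ast)$ minimizing its length, and return $\{u,uv_{i^\ast},uv_{j^\ast}\}$. Correctness in this last case rests on the identity $\abs{\biglcp\{v_i^\omega\}}=\min_{i,j}\abs{v_i^\omega\lcp v_j^\omega}$ together with \eqref{eqn:lcpLw}, which give $\biglcp(\{u,uv_{i^\ast},uv_{j^\ast}\}w)=u(w\lcp\biglcp\{v^\omega:uv\in L,v\neq\ew\})=\biglcp(Lw)$ for every $w$.

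Every branch issues only polynomially many of the primitives above, applied to words of polynomially bounded SLP-size, so the whole procedure runs in polynomial time. The main obstacle, and the only place where genuine work beyond bookkeeping occurs, is the computation of $w_1^\omega\lcp w_2^\omega$ for SLP-compressed $w_1,w_2$: reducing it to the finite computation $w_1w_2\lcp w_2w_1$ via Corollary~\ref{cor:fw}, with a commutation test to isolate the infinite case, is precisely what keeps the selection of the minimizing pair---and thereby the detection of the ultimately periodic case---inside PTIME.
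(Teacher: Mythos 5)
Your proposal is correct and follows essentially the same route as the paper's proof: compute an SLP for $u=\biglcp L$, test $u\in L$, and branch exactly as in Theorem~\ref{t:three}, using Corollary~\ref{cor:fw} to reduce $v_1^\omega\lcp v_2^\omega$ to the finite computation $v_1v_2\lcp v_2v_1$ (with a commutation test for the periodic case) via the standard PTIME SLP operations. Your small deviations --- fixing one $w_1$ and minimizing $\abs{w_1\lcp w_2}$ in the $u\notin L$ case, and testing commutation against a single fixed $v_1$ rather than all pairs --- are valid but inessential variants of what the paper does.
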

\begin{proof}
 The proof distinguishes the same cases as in the proof of Theorem \ref{t:three} and relies on polynomial algorithms on SLPs \cite{Lohrey2012}.
 If $L$ contains at most three words we are done.
 Since the words in $L$ are given as SLPs, we can calculate (a SLP for) the \txtLCP $u$ of the words in $L$.
 Next, we determine whether $u$ is in $L$.
 This can again be checked in polynomial time. 
  If this is not the case, then we can select two words $w_1,w_2\in L$ so that $u=w_1\lcp w_2$
 giving us $L'=\{w_1,w_2\}$ in polynomial time.
  So, now assume that $u$ is in $L$.
 Next, we check whether or not $L$ is ultimately periodic, i.e., whether for any non-empty words $v_1,v_2$
 with $uv_1,uv_2\in L$, $v_1^\omega = v_2^\omega$.
 By Lemma~\ref{lem:fw} this is the case iff $v_1v_2=v_2v_1$. 
 The latter can be checked in polynomial time as concatenation and equality of SLPs can be calculated in polynomial time.
 If this is the case, then we obtain $L' =\{u,uv\}$ for some $uv\in L$ with $v\neq\epsilon$ in polynomial time.

 It remains to consider the case where the \txtLCP $u$ is contained in $L$ and $L$ is not ultimately periodic.
 Then we need to determine words $uv_1$ and $uv_2$ in $L$ with $v_1\neq\epsilon\neq v_2$ such that
 $v_1^\omega \lcp v_2^\omega$ has minimal length.
 Since $v_1^\omega \lcp v_2^\omega = v_1v_2\lcp v_2v_1$ (see Corollary~\ref{cor:fw}),
 such a pair can be computed in polynomial time as well. Therefore, $L'=\{u,uv_1,uv_2\}$
 can be computed in polynomial time.
\end{proof}

\noindent
The following lemma explains that equivalence of two non-empty languages of cardinalities at most $3$
can be decided in polynomial time.

\begin{lemma}\label{l:equiv}
 Let $L_1, L_2 \subseteq \Sigma^*$ denote non-empty languages consisting of at most three words each, which are all given
 by SLPs. Then $L_1\stackrel{?}{\equiv} L_2$ can be decided in polynomial time.
\end{lemma}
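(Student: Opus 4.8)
The plan is to decide $L_1\equiv L_2$ by reducing each language to a canonical pair consisting of its \txtLCP $u_i:=\biglcp L_i$ together with a single finite-or-infinite word $W_i$ recording the longest common extension of $u_i$, and then comparing these pairs. Concretely, for a language $L$ with at least two (distinct) words I would first establish the closed form
\[
\biglcp(L\,w)\;=\;u\,(w\lcp W_L)\qquad\text{for all } w\in\als,
\]
where $u=\biglcp L$ and $W_L:=\biglcp\{\,v^\omega\mid uv\in L,\ v\neq\ew\,\}$. When $u\in L$ this is precisely the chain of equalities~\eqref{eqn:lcpLw} from the proof of Theorem~\ref{t:three}, which rests on Lemma~\ref{l:omega}. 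When $u\notin L$ every word of $L$ has the form $uv$ with $v\neq\ew$; since $u$ is the full \txtLCP the nonempty tails cannot all begin with the same letter (otherwise $u$ could be extended), so two of them start with distinct letters, whence $\biglcp\{vw\mid uv\in L\}=\ew$ for every $w$ and $\biglcp(L\,w)=u=u(w\lcp\ew)$ with $W_L=\ew$. Thus the closed form holds uniformly in both sub-cases.

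From the closed form I would derive the characterization that, for non-empty $L_1,L_2$, $L_1\equiv L_2$ holds if and only if the two languages fall into the same one of three types and agree on the recorded data. A singleton $\{u\}$ satisfies $\biglcp(\{u\}w)=uw$; probing with $w=\ew$ and with a letter $a\in\al$ with $a\not\noq W_{L_2}$ (such $a$ exists since $\al$ has at least two letters) shows a singleton is equivalent to no language of two or more words, and two singletons are equivalent iff their words coincide. For languages with at least two words, sufficiency of $u_1=u_2\wedge W_1=W_2$ is immediate from the closed form; for necessity, $w=\ew$ forces $u_1=u_2$, and then $w\lcp W_1=w\lcp W_2$ for all $w$. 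If $W_1\neq W_2$, let $p=W_1\lcp W_2$ be their finite common prefix and take $w=pa$, where $a$ is the letter immediately following $p$ in whichever of $W_1,W_2$ properly extends $p$ (at least one does); then $w\lcp W_1\neq w\lcp W_2$, a contradiction. Hence $L_1\equiv L_2\iff u_1=u_2\wedge W_1=W_2$, and in particular different types are always inequivalent.

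For the algorithm I would perform, on the SLP-compressed words, exactly the SLP operations already used in Corollary~\ref{c:finite}, all polynomial by \cite{Lohrey2012}. After removing duplicate words (SLP equality is polynomial) I compute $u_i$, decide $u_i\in L_i$, and extract each nonempty tail $v$ (the suffix of a word of $L_i$ after its prefix $u_i$, obtainable as an SLP); since $L_i$ has at most three words, $u_i\in L_i$ leaves at most two tails. I then classify $L_i$ into three types: singleton; \emph{finite-extension} (either $u_i\notin L_i$, giving $W_i=\ew$, or $L_i$ not ultimately periodic, giving the finite word $W_i=v_1v_2\lcp v_2v_1$ of Corollary~\ref{cor:fw}, computable as an SLP); or \emph{infinite-extension} ($u_i\in L_i$ and $L_i$ ultimately periodic, giving $W_i=v^\omega$ for a nonempty tail $v$). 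Ultimate periodicity is tested by the single commutation check $v_1v_2=v_2v_1$ of the two tails, which by Lemma~\ref{lem:fw} and Lemma~\ref{lem:com} is equivalent to their sharing a common primitive root. Finally I compare: different types yield inequivalence; for the finite-extension type I test $u_1=u_2$ and SLP-equality $W_1=W_2$; for the infinite-extension type I test $u_1=u_2$ and $v_1v_2=v_2v_1$, which decides $v_1^\omega=v_2^\omega$ and hence $W_1=W_2$. Every test is polynomial, so $L_1\stackrel{?}{\equiv}L_2$ is decided in polynomial time.

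The step I expect to be the main obstacle is the necessity direction of the characterization together with its computational realization: one must argue that whenever $W_1\neq W_2$ a single (possibly exponentially long, SLP-represented) probe word separates $\biglcp(L_1w)$ from $\biglcp(L_2w)$, and that the three types are genuinely pairwise inequivalent. The infinite case is the delicate one, since there $W_i$ is not a finite word and equality of the two longest extensions must be reduced to the finite commutation test $v_1v_2=v_2v_1$ rather than checked directly.
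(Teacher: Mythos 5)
Your proposal is correct and follows essentially the same route as the paper: both reduce equivalence to comparing the \txtLCP $u$ together with the extension $W=\biglcp\{v^\omega\mid uv\in L,\ v\neq\ew\}$ (Equation~\ref{eqn:lcpLw}, Lemma~\ref{l:omega}), both settle the periodic case by the commutation test $v_1v_2\stackrel{?}{=}v_2v_1$ (Lemma~\ref{lem:fw}, Corollary~\ref{cor:fw}), and both invoke the same polynomial-time SLP primitives. The only notable differences: the paper normalizes via the fact $L\equiv L\cup\{u\}$ so that Equation~\ref{eqn:lcpLw} always applies, while you handle $u\notin L$ directly by showing $W_L=\ew$ there --- these are interchangeable devices; and the paper states only the sufficiency of its criterion (``$L_1\equiv L_2$ \emph{if} the extensions agree''), whereas your probe-word argument ($w=\ew$ forcing $u_1=u_2$, then $w=pa$ with $p=W_1\lcp W_2$, plus the pairwise separation of your three types) explicitly establishes the necessity that a correct decision procedure also requires, so on this point your write-up is more complete than the paper's.
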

\begin{proof}
 If one of the two languages contains just a single word, 
 then $L_1 \equiv L_2$ iff $L_1 = L_2$ --- which can be
 decided in polynomial time.
  Otherwise, we first compute $\biglcp L_1$ and $\biglcp L_2$.
 If these differ, then by definition $L_1$ cannot be equivalent to $L_2$. 
  Therefore assume now that $u=\biglcp L_1 = \biglcp L_2$ is the common \txtLCP.
 
 Obviously, $L_i$ and $L_i\cup\{u\}$ are equivalent w.r.t.\ the \txtLCP ($i=1,2$).
 Thus, for testing equality, we may add $u$ to $L_1$ resp.\ $L_2$, if it is missing, and reduce $L_1$ resp.\ $L_2$ subsequently to languages of at most three words.
  
 From Equation \ref{eqn:lcpLw} follows that $L_1 \equiv L_2$ if
 $\biglcp\{v_1^\omega \mid uv_1 \in L_1, v_1 \neq \epsilon\} = \biglcp \{v_2^\omega \mid uv_2 \in L_2, v_2 \neq \epsilon\}$.
 This is the case if either $v_1^\omega = v_2^\omega$ for all $uv_1 \in L_1$ and $uv_2 \in L_2$ or
 for $uv_i, uv'_i \in L_i$, $v_i\neq\epsilon\neq v'_i$ with $w_i = v_i^\omega\lcp {v'}_i^\omega$ is minimal for $L_i$ ($i =1,2$),
 $w_1 = w_2$ holds.
 
 In the first case $v_1^\omega = v_2^\omega$ for all $uv_1 \in L_1$ and $uv_2 \in L_2$
 can be checked in polynomial time according to the periodicity lemma of Fine and Wilf
 (cf. Corollary \ref{cor:fw}). 
  In the second case $w_1,w_2$ can be computed and compared in polynomial time as all words are given as SLPs.
 Thus, we ultimately arrive at a polynomial time decision procedure. 
             \end{proof}
\begin{remark}
Note that in light of the equivalence test, we can choose distinct letters $a,b\in\Sigma$,
and equivalently replace the language $L_1=\{uv_1,uv_2\}$ with 
$L'_1=\{ua,ub\}$ whenever $v_1\neq\epsilon\neq v_2$ and $v_1\lcp v_2=\epsilon$, and
the language $L_2=\{u,uv_1,uv_2\}$ by the language $L'_2=\{u,uwa,uwb\}$ 
whenever $w =v_1v_2\lcp v_2v_1 \neq v_1v_2$ holds.
This reduced representation allows for an easier computation.
\end{remark}
\noindent
Now we have all pre-requisites to prove the main theorem of our paper.

\begin{theorem}\label{t:main}
Assume that $G$ is a proper context-free grammar with $L=L(G)$ non-empty.
Then the longest common prefix of $L$ can be calculated in polynomial time.
\end{theorem}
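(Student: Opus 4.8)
The plan is to put together the two structural results of the paper. Theorem~\ref{thm:witness-height} reduces $\biglcp L$ to $\biglcp\hat L$ for the \emph{finite} sublanguage $\hat L$ of words admitting a derivation tree of height at most $4N$, while Theorem~\ref{t:three} and Corollary~\ref{c:finite} let me replace every intermediate language by an equivalent representative of at most three words, each kept compressed as an SLP. The overall strategy is a bottom-up fixpoint computation over an unfolding of the grammar that, for each nonterminal, maintains such a three-word representative instead of the (possibly exponentially large) language itself.

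First I would normalise: by the preliminaries a proper CNF grammar $G'$ with $L(G')=L\setminus\{\ew\}$ is computable in polynomial time, and since $\biglcp L=\ew$ whenever $\ew\in L$, it suffices to compute $\biglcp L(G')$; write $N=\abs V$ for $G'$. By Theorem~\ref{thm:witness-height}, $\biglcp L(G')=\biglcp\hat L$. I would then unfold $G'$ into an acyclic grammar by introducing, for every nonterminal $A$ and every $1\le h\le 4N$, a copy $\bl A h$ that generates exactly the words of $A$ possessing a derivation tree of height at most $h$; there are only $O(N^2)$ such copies, $\hat L=L(\bl S{4N})$, and the defining equations are $L(\bl A 1)=\{a\mid A\to a\in P\}$ and, for $h\ge2$,
\[
L(\bl A h)=\{a\mid A\to a\in P\}\;\cup\!\!\bigcup_{A\to BC\in P}\!\! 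L(\bl B{h-1})\,L(\bl C{h-1}).
\]

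The core step is a dynamic program that computes, bottom-up in $h$ and only for \emph{productive} copies, a set $R(\bl A h)\subseteq L(\bl A h)$ of at most three SLP-words with $L(\bl A h)\equiv R(\bl A h)$. For the base case, $R(\bl A 1)$ is obtained from the explicit finite set $\{a\mid A\to a\in P\}$ by Corollary~\ref{c:finite}. For the inductive step I would use that, by Lemma~\ref{equiv}, the relation $\equiv$ is a congruence for both concatenation and (finite) union; hence each non-empty $L(\bl B{h-1})L(\bl C{h-1})$ is equivalent to $R(\bl B{h-1})R(\bl C{h-1})$ (at most nine words, each the concatenation of two lower-level SLPs), and $L(\bl A h)$ is therefore equivalent to the finite language obtained by unioning these with $\{a\mid A\to a\in P\}$. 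This finite language has at most $10\abs P$ words, and a further application of Corollary~\ref{c:finite} collapses it to the desired $R(\bl A h)$. Finally $\biglcp L=\biglcp\hat L=\biglcp R(\bl S{4N})$ is the \txtLCP of at most three SLP-words, which is computable in polynomial time.

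The main obstacle I anticipate is arguing that the whole computation stays polynomial despite iterating over $4N$ levels, since naive manipulation of SLPs could blow up their size exponentially. The resolution hinges on the fact that Corollary~\ref{c:finite} returns a \emph{subset} of its input: every stored representative word is thus a genuine word of the language, obtained purely by concatenating two representatives from the previous level, so its SLP grows by only a single rule per level and, sharing subexpressions across the $O(N^2)$ table entries, the total SLP size remains polynomial. The \txtLCP and root computations that Corollary~\ref{c:finite} performs merely to \emph{select} which three words to keep produce transient SLPs that are discarded and never enter the stored representatives. With $O(N^2)$ table entries, each filled by $O(\abs P)$ concatenations followed by one polynomial-time reduction, the algorithm runs in polynomial time overall.
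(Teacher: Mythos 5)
Your proposal is correct and follows essentially the same route as the paper's own proof: unfold $G$ into the height-indexed acyclic grammar with copies $\bl{A}{i}$, compute bottom-up a three-word SLP-represented subset equivalent to each $[\bl{A}{i}]'$ via Corollary~\ref{c:finite}, and read off the \txtLCP at $\bl{S}{4N}$. Your explicit treatment of the congruence property of $\equiv$ (which the paper leaves as "by construction") and of why the stored SLPs stay polynomially sized (since Corollary~\ref{c:finite} returns subsets, so each representative is one concatenation rule on top of shared lower-level SLPs) only makes explicit what the paper's proof uses implicitly.
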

\begin{proof}
Assume w.l.o.g.\ that $G$ is a CFG in Chomsky normal form as this simplifies the notation. For the actual fixed-point iteration this is not required. 
Then we calculate $\biglcp L(G)$ as follows.
We build (implicitly, see the following remark) an acyclic CFG $\hat{G}$ in polynomial time such that $L(\hat{G})$ consists of all words
of $L(G)$ for which there is a derivation tree of height at most $4N$ where $N$ is the number
of nonterminals in $G$. 
To this end, we tag the variables with a counter that bounds the height of the derivation trees. In more detail, for every rewriting rule $A\to BC$ of $G$ and every $i\in\{1,\ldots,4N\}$ we add to $\hat{G}$ the rule $\bl{A}{i}\to \bl{B}{i-1}\bl{C}{i-1}$, and for every rule $A\to a$ of $G$ and every $i\in\{0,1,\ldots,4N\}$ we add the rule $\bl{A}{i}\to a$ to $\hat{G}$. In a derivation tree w.r.t.\ $\hat{G}$ every path starting at some node labeled by $\bl{A}{i}$ has thus length at most $i$ as $i$ has strictly decreases when moving down to towards the leaves, hence, a node labeled by $\bl{A}{i}$ can only be the root of a (sub-)tree of height at most $i$. Further, every derivation tree of $\hat{G}$ becomes a derivation tree of $G$ by simply replacing $\bl{A}{i}$ by $A$. As every rule of $G$ is copied at most $4N+1$ times with $N$ the number of nonterminals of $G$, the size of $\hat{G}$ grows at most quadratically with the size of $G$. In particular, $\hat{G}$ is still proper and in CNF.
For more details, see e.g.\ section 3 in \cite{DBLP:conf/calco/EsparzaL11}.

By Theorem \ref{thm:witness-height}, we know that $\biglcp L(G) = \biglcp L(\hat{G})$.
By construction, $\hat{G}$ is also in Chomsky normal form.
For $i$ from $0$ to (at most) $4N$ (with $N$ still the number of variables of the original grammar $G$ -- as $\hat{G}$ is acyclic we only need to compute $[\bl{A}{i}]$ once when proceeding bottom-up), we then compute in every iteration for every nonterminal $\bl{A}{i}$ (for the currently value of $i$) first the language
\begin{eqnarray*}
[\bl{A}{i}]':=\{a\in\Sigma^*\mid \bl{A}{0}\to a\in P\}\cup
\bigcup_{A\to BC \in G}[\bl{B}{i-1}]\cdot[\bl{C}{i-1}]		\label{eq:set}
\end{eqnarray*}
By induction on $i$, we may assume that the languages $[\bl{B}{i-1}],[\bl{C}{i-1}]$ (a) have already been computed, (b) consist of at most three words, and (c) every word is given as an SLP.
Note that the cardinality of every language $[\bl{A}{i}]'$ is polynomial in the size of $G$.
By virtue of Corollary \ref{c:finite}, we therefore can reduce $[\bl{A}{i}]'$ in polynomial time to a language $[\bl{A}{i}]\subseteq [\bl{A}{i}]'$ with $[\bl{A}{i}]\equiv [\bl{A}{i}]'$ and $\abs{[\bl{A}{i}]}\le 3$.
By construction, we then have
\begin{eqnarray*}
[\bl{A}{i}] &\equiv&	\{w\in\Sigma^*\mid \bl{A}{i}\Rightarrow^* w\}	\label{eq:A_rep}
\end{eqnarray*}
Since $\hat G$ has polynomially many nonterminals only, the overall algorithm runs in 
polynomial time.
\end{proof}
\begin{remark}
Note that we can drop the assumption that the grammars $G$ and likewise $\hat G$ are in Chomsky normal 
form
if the right-hand sides of all rules have bounded lengths. 
Then the cardinality of the languages $[\bl{A}{i}]'$ are still polynomial.
Further, instead of spelling out the grammar $\hat G$ explicitly, we may perform a
round robin fixpoint iteration where in every round we first compute $$[A]':=\bigcup_{A\to w_1B_1w_2 B_2\ldots w_k B_k w_{k+1}} \{w_1\}\cdot [B_1]\cdot\{w_2\}\cdot [B_2] \cdots \{w_k\}\cdot [B_k]\cdot \{w_{k+1}\}$$ with initially $[A]:=\{w\in\als\mid A\to w \in G\}$, then updating $[A]$ so that $[A]\subseteq [A]'$ with $[A]\equiv [A]'$ and $\abs{[A]}\le 3$. Theorem \ref{thm:witness-height} guarantees that the \txtLCP is attained after at most $4N$ iterations. Using standard approaches like work lists, we only need to recompute $[A]$ if there is some rule $A\to \gamma B\delta$ in $G$ and $[B]$ has changed since the last recomputation of $[A]$. As shown in Lemma~\ref{l:equiv} we can easily check if $[B]\not\equiv [B]'$ in every round and accordingly insert $A$ into the work list.
\end{remark}
\noindent
We demonstrate this simplified version of the algorithm described in Theorem \ref{t:main} by an example.

\begin{example}\label{ex:alg}
Consider the following grammar $G$ with the following rules:
\begin{align*}
S &\rightarrow A ababaac \\
A &\rightarrow ab\,A\,abaab \mid ab\,A\,abaac \mid \epsilon
\end{align*}
  The round robin fixpoint iteration would proceed by iteratively evaluating the equations
 \[
 \begin{array}{llcl}
 & [A]' & := & \{ ab w abaab, abwabaac, \ew \mid w \in [A]\}\\
 & [S]' & := & \{wababaac \mid w\in [A]\}\\
 \end{array}
 \]
 and recomputing the languages $[A]$ and $[S]$ so that $[A]\equiv [A]'$ and $[S]\equiv [S]'$ and both $[A]$ and $[S]$ consist of at most three words where we further reduce the words of $[A]$ and $[S]$ as described in the remark following Lemma~\ref{l:equiv}.
 As $[A]$ does not depend on $[S]$, we can postpone the computation of $[S]$ after $[A]$ has converged.
  In the first round, we have:
 \[
 [A] = [A]' = \{\epsilon\}
 \]
 For the second round, we first calculate:
 \[
 [A]' = ab \{\epsilon\} abaab \cup ab \{\epsilon\} abaac \cup \{\epsilon\} = \{ababaab, ababaac, \epsilon\}
 \]
 and thus update $[A]$ to $[A] := \{(ab)^2aab, (ab)^2aac, \epsilon\}$.
 For the third round, we obtain
  \[
  \begin{array}{lll}
  [A]' & = & ab \{(ab)^2aab, (ab)^2aac, \epsilon\} abaab \cup ab \{(ab)^2aab, (ab)^2aac, \epsilon\} ab aac \cup \{\epsilon\} \\
  &=& \{(ab)^3a(ab)^2aab, (ab)^3aacabaab, (ab)^2aab\} \cup \\
  && \{(ab)^3a(ab)^2aac, (ab)^3aacabaac, (ab)^2aac\} \cup \{\epsilon\}\\
  & \equiv& \{(ab)^3aababaab, (ab)^2aab, \epsilon\} \\
  & \equiv& \{(ab)^3, (ab)^2aa, \epsilon\}\\   & =: & [A]
  \end{array}
  \]
  which is already the fixpoint.
  Therefore we obtain
  \[
   \begin{array}{lll}
   [S]' & =& \{(ab)^3, (ab)^2aa, \epsilon\} ababaac\\
    &=& \{(ab)^3(ab)^2aac, (ab)^2aa(ab)^2aac, (ab)^2aac\}\\
  	   &\equiv& \{(ab)^3(ab)^2aac, (ab)^2aac\} \\
  	   &\equiv& \{(ab)^3, (ab)^2aa\}\\   	   & =: & [S]
   \end{array}
   \]
   So $\biglcp L = (ab)^3\lcp (ab)^2aa = (ab)^2a$.
\end{example}

\section{Conclusion}\label{s:conclusion}

We have shown that the longest common prefix of a non-empty context-free language
can be computed in polynomial time. This result was based on two structural results, namely, that
it suffices to consider words with derivation trees of bounded height, and second that each non-empty language
is equivalent to a sublanguage consisting of at most three elements.
For the actual algorithm, we relied on succinct representations of long words by means of SLPs.
It remains as an intriguing open question whether the presented method can be generalized to more 
expressive grammar formalisms.

\section{Appendix}

\subsection{Proof of Theorem~\ref{thm:lcp-pump}}

We split the proof of the theorem into several lemmata covering the cases
\begin{enumerate}
\item
$L=(x,\ew)[(y,\ew)+(z,\ew)]^\ast w=x(y+z)^\ast w$ (cf.\ Lemma~\ref{ssec:lcp-reg})
\item
$L=(x,\bx)(y,\by)^\ast w$ (cf.\ Lemma~\ref{ssec:lcp-one-pt})
\item
$L=(x,\bx)[(y,\by)+(z,\bz)]^\ast w$ (cf.\ Lemma~\ref{ssec:lcp-two-pt}), and
\item
$L=(x,\bx)[(y_1,\by_1)+\ldots+(y_l,\by_l)]^\ast w$ for arbitrary $l\in\N$. (cf.\ Lemma~\ref{ssec:lcp-mult-pt}).
\end{enumerate}

\begin{lemma}\label{ssec:lcp-reg}
$L=x(y+z)^\ast w \Rightarrow \biglcp L = x\biglcp (y+z)^{\le 1}w$
\end{lemma}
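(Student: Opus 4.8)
The plan is to factor out the common left part $x$ and then reduce the full regular language to its three ``shortest'' members. By left-distributivity of concatenation over the \txtLCP we have $\biglcp L = \biglcp(x\{sw\mid s\in(y+z)^\ast\}) = x\cdot\biglcp\{sw\mid s\in(y+z)^\ast\}$, so it suffices to treat the case $x=\ew$ and to prove that
\[
\biglcp\{sw\mid s\in(y+z)^\ast\} = \pi, \qquad \pi:=\biglcp\{w,yw,zw\}.
\]
The inequality $\biglcp\{sw\mid s\in(y+z)^\ast\}\noq\pi$ is immediate: the three words $w,yw,zw$ all have the form $sw$ (take $s=\ew,y,z$), and since $\{w,yw,zw\}$ is a subset, the \txtLCP can only grow shorter as the language grows. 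Hence the entire content is the reverse direction $\pi\noq sw$ for \emph{every} $s\in(y+z)^\ast$.

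I would prove $\pi\noq sw$ by induction on the number of factors of $s$. The cases of zero or one factor are exactly the defining words $w,yw,zw$, for which $\pi\noq sw$ holds by the definition of $\pi$. For the inductive step write $s=ts'$ with $t\in\{y,z\}$ and set $u:=s'w$; by the induction hypothesis $\pi\noq u$, and I also have the base facts $\pi\noq w$ and $\pi\noq tw$. The goal is then $\pi\noq tu=sw$.

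The crux is this single step, and it is purely combinatorial on prefixes. From $\pi\noq w$ and $\pi\noq u$ I obtain $\pi\noq w\lcp u$, hence $\abs{\pi}\le\abs{w\lcp u}$. Now both $\pi$ and $t(w\lcp u)$ are prefixes of $tw$ --- the former because $\pi\noq tw$, the latter because $w\lcp u\noq w$ --- so they are comparable. If $\pi\noq t(w\lcp u)$, then by left-distributivity $t(w\lcp u)=tw\lcp tu\noq tu$, so $\pi\noq tu$ and we are done. Otherwise $t(w\lcp u)\no\pi$, which forces $\abs{t}+\abs{w\lcp u}<\abs{\pi}\le\abs{w\lcp u}$ and thus $\abs{t}<0$, a contradiction. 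This closes the induction and yields $\pi\noq\biglcp\{sw\mid s\in(y+z)^\ast\}$, completing the proof.

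I expect no serious obstacle here: the statement has empty bars, so no periodicity or Fine--Wilf input is needed and the argument stays uniform --- in particular it requires no special treatment of the degenerate cases $y=\ew$ or $z=\ew$ (those merely make some base words coincide), and it generalizes verbatim from two generators $y,z$ to any finite family $y_1,\dots,y_l$. The only point demanding care is the length bookkeeping in the inductive step, namely ensuring that the second comparison case is genuinely impossible rather than merely atypical.
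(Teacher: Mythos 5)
Your proof is correct, and it shares the paper's skeleton --- discard $x$ by left-distributivity, then induct on the number of factors drawn from $(y+z)$ --- but the inductive step is carried out by a genuinely different device. The paper peels off the \emph{last} factor, writing $\alpha=\alpha'yw$, and stays entirely at the level of lcp identities: it applies the induction hypothesis to $\alpha'w$, abbreviates $w'=w\lcp yw$, and uses distributivity $\alpha'(w\lcp yw)=\alpha'w\lcp\alpha'yw$ to absorb $\alpha$ into $w\lcp yw\lcp zw$; the middle equality there (replacing $\alpha'w$ by $\alpha'w'$ under the lcp) is justified only implicitly, by the same kind of length consideration that you make explicit. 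You instead prove the equivalent pointwise claim $\pi\noq sw$, peel off the \emph{first} factor $s=ts'$, and close the step by observing that $\pi$ and $t(w\lcp u)$ (with $u=s'w$) are both prefixes of $tw$, hence comparable, while the case $t(w\lcp u)\no\pi$ contradicts the bound $\abs{\pi}\le\abs{w\lcp u}$; this case analysis is airtight, including the corner cases $t=\ew$. What your route buys: every step is fully justified, the degenerate cases $y=\ew$ or $z=\ew$ need no separate treatment, and the argument generalizes verbatim to any finite family of generators $y_1,\dots,y_l$ --- a statement the paper only obtains indirectly, since its Lemma~\ref{ssec:lcp-mult-pt} reaches the many-generator case by a separate reduction through the two-generator lemmas. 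What the paper's formulation buys is mainly continuity: its equational, last-factor manipulation is the template that the substantially harder two-sided Lemmas~\ref{ssec:lcp-one-pt} and~\ref{ssec:lcp-two-pt} elaborate on, where the pairs $(y,\by)$ and witness arguments enter and the purely one-sided prefix bookkeeping no longer suffices.
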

\begin{proof}
As $x$ does not matter, simply assume $x=\ew$.
We show by induction on $m$ that for any $\alpha \in (y+z)^m w$
\[
w\lcp yw \lcp zw = w\lcp yw \lcp zw \lcp \alpha
\]
The case $m \le 1$ is obviously true. Fix any $m> 1$ and any $\alpha \in (y+z)^{m+1}w$; wlog.\ $\alpha = \alpha' yw$. Set $w' = w\lcp yw$. Then:
\[
  w\lcp yw \lcp zw \stackrel{\text{Induction}}{=} w \lcp yw \lcp zw \lcp \alpha' w =  w' \lcp zw \lcp \alpha' w' = w' \lcp zw \lcp \alpha'(w\lcp yw) = w\lcp yw \lcp zx \lcp \alpha' w \lcp \alpha' y w
\]
\end{proof}

\begin{lemma}\label{ssec:lcp-two-pt}
Let $L= (x,\bar{x})[(y,\bar{y})+(z,\bar{z})]^\ast w$. Then $\biglcp L = \biglcp (x,\bar{x})[ (y,\bar{y})^{\le 2} + (z,\bar{z})^{\le 2}]w$.
\end{lemma}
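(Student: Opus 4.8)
The plan is to prove the two inequalities separately, where $R:=\biglcp(x,\bx)[(y,\by)^{\le2}+(z,\bz)^{\le2}]w$ denotes the finite \txtLCP on the right. The inequality $\biglcp L\noq R$ is immediate, since $(x,\bx)[(y,\by)^{\le2}+(z,\bz)^{\le2}]w\subseteq L$ and $\biglcp$ is monotonically decreasing under enlarging the language. For the converse I would first reformulate the goal as $\biglcp L=\biglcp L_y\lcp\biglcp L_z$ for the two \emph{pure} sublanguages $L_y:=(x,\bx)(y,\by)^\ast w$ and $L_z:=(x,\bx)(z,\bz)^\ast w$: by Lemma~\ref{ssec:lcp-one-pt} we have $\biglcp L_y=\biglcp(x,\bx)(y,\by)^{\le2}w$ and $\biglcp L_z=\biglcp(x,\bx)(z,\bz)^{\le2}w$, and since both pure families contain $xw\bx$, their common refinement is exactly $R=\biglcp L_y\lcp\biglcp L_z$. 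As $L_y\cup L_z\subseteq L$ again gives $\biglcp L\noq R$, everything reduces to showing that $R\noq\alpha$ for every genuinely mixed word $\alpha=xs_1\cdots s_nw\,\bar s_n\cdots\bar s_1\bx\in L$, where each $(s_i,\bar s_i)\in\{(y,\by),(z,\bz)\}$.

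Next I would dispose of the degenerate and misaligned cases. If $y\by=\ew$ or $z\bz=\ew$, one generator is the identity and the statement collapses to Lemma~\ref{ssec:lcp-one-pt} (or Lemma~\ref{lem:lcp-reg-1}); so assume both generators are nontrivial. If $w\not\noq yw$ (symmetrically $w\not\noq zw$), then by Lemma~\ref{lem:w} we already have $xw\bx\lcp xyw\by\bx=x(w\lcp yw)\no xw$, so $R\no xw$ lies strictly inside the common front and never reaches the trailing $\bx$. Consequently $\abs{R}<\abs{xw}\le\abs{xs_1\cdots s_nw}$, so the length-$\abs{R}$ prefix of $\alpha$ coincides with the corresponding prefix of the word $xs_1\cdots s_nw$ of the \emph{regular} language $x(y+z)^\ast w$; by Lemma~\ref{ssec:lcp-reg} the \txtLCP of that language is attained using each generator at most once and hence lies in $R$. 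This leaves the main case $w\noq yw$ and $w\noq zw$.

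In the main case there are conjugates $y',z'$ with $wy'=yw$ and $wz'=zw$, so pushing $w$ to the left rewrites every $\alpha$ as $xw\,\widetilde s_1\cdots\widetilde s_n\,\bar s_n\cdots\bar s_1\,\bx$ with $\widetilde s_i\in\{y',z'\}$; in particular $xw$ is a common prefix of $L$, and after factoring it out the analysis is governed by how the blocks $y',z'$ and the back words $\by,\bz,\bx$ sit relative to a suitable $q^\omega$. I would then split on whether $y$ and $z$ commute. If $yz=zy$, then by Lemma~\ref{lem:com} both are powers of a single primitive $p$, hence $y',z'$ are powers of a single primitive $q$ with $pw=wq$; all fronts become powers of $q$, every deviation from $q^\omega$ is produced by the back parts, and exactly as in Lemma~\ref{ssec:lcp-one-pt} Corollary~\ref{cor:fw} bounds this deviation by less than $\abs{q}$, so a single additional conjugate $q$ — that is, a second use of one generator — already yields a witness inside a pure sublanguage and hence inside $R$. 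If $yz\neq zy$, I would instead argue that $R$ is forced to be short: the single-use words $xyw\by\bx$ and $xzw\bz\bx$ (already in $R$), together with Corollary~\ref{cor:fw}, confine the common prefix to a length beyond which any genuinely mixed $\alpha$ could first disagree with $xw\bx$, so interleaving the two generators can never shorten the prefix below $R$.

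The hardest part, and where I expect to spend the bulk of the effort, is the main case: the trailing $\bx$ destroys any naive recursion, because the suffix family $\{\widetilde s_1\cdots\widetilde s_n\,\bar s_n\cdots\bar s_1\,\bx\}$ is itself \emph{not} of the pumping form $(\cdot)[\cdots]^\ast(\cdot)$ — in $\ct$ the back letters accumulate strictly outside $\bx$, whereas here $\bx$ sits outermost — so one cannot simply appeal to a smaller instance. Controlling arbitrary interleavings of $y'$ and $z'$ followed by the reversed back block, and verifying in the non-commuting subcase that no interleaving produces an earlier mismatch than the two single-generator words of $R$, is the genuine combinatorial core. This is precisely the bookkeeping that Corollary~\ref{cor:fw} and the factorizations $\bx=q^nq'\bx'$ and $\by=q^{k'}\hq\by'$ from the proof of Lemma~\ref{ssec:lcp-one-pt} are designed to handle, and I would mirror that case distinction, now carried out simultaneously for the two back words $\by$ and $\bz$.
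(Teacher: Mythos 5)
Your overall skeleton --- reducing the claim to showing $R\noq\alpha$ for every genuinely mixed $\alpha$, splitting off the degenerate and misaligned cases (your use of Lemma~\ref{lem:w} and Lemma~\ref{ssec:lcp-reg} there is essentially sound), and then distinguishing $yz=zy$ from $yz\neq zy$ --- agrees with the paper's proof. But both branches of your main case have real gaps. In the non-commuting branch your central claim is false: you assert that the single-use words $xyw\by\bx$ and $xzw\bz\bx$, together with Corollary~\ref{cor:fw}, confine the common prefix to a length before which no mixed word can disagree with $xw\bx$. Take $x=w=\ew$, $(y,\by)=(ab,b)$, $(z,\bz)=(abb,\ew)$ and $\bx=abb$ (so $yz\neq zy$ and both generators are trivially aligned with $w=\ew$). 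Then $xw\bx\lcp xyw\by\bx\lcp xzw\bz\bx=abb\lcp abbabb\lcp abbabb=abb$, yet the mixed word $xyzw\bz\by\bx=ababbbabb$ disagrees with $xw\bx$ already at the third letter. The lemma survives here only because the double-use word $xy^2w\by^2\bx=ababbbabb$ disagrees at the same spot: the words using one generator \emph{twice} are indispensable in this branch, and your argument never connects mixed words to them. The missing idea is the paper's swap argument: if a mixed witness $\alpha=xy^jyz\alpha'\bz\by\by^j\bx$ existed, swapping the innermost adjacent pair $yz$ into $zy$ yields another word of $L$, whence $\biglcp L=xw\bx\lcp xy^j(yz\lcp zy)$; by Corollary~\ref{cor:fw} ($yz\lcp zy=y^\omega\lcp z^\omega$) a witness then already occurs in $\px\py^\ast w\cup\px\pz^\ast w$, and Lemma~\ref{ssec:lcp-one-pt} finishes.

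In the commuting branch --- which, contrary to your closing paragraph, is where the paper spends nearly all of its effort, the non-commuting case being the short one --- your plan is to proceed ``exactly as in Lemma~\ref{ssec:lcp-one-pt}'', mirroring its factorizations ``simultaneously for the two back words $\by$ and $\bz$''. That defers precisely the core of the proof, and there is nothing to mirror directly: once $\bx=q^nq'\bx'$, $\by=q^{k'}\hq\by'$ and $\bz=q^{l'}\dq\bz'$ are factorized along $q^\omega$, one must exclude that some \emph{interleaving} of the two distinct tails $\by,\bz$ agrees with $xw\bx$ for longer than every word of $R$ does --- an interaction that has no counterpart in the single-pair case. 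In the paper this requires classifying the possible witness shapes $(x,\bx)\Gamma^\ast(z,\bz)(y,\by)^+w$ resp.\ $(x,\bx)\Gamma^\ast(y,\by)(z,\bz)^+w$ and then working through a long chain of sub-cases (on $\by'$, $\hq$, $l'$, $\bz'$, $\dq$, down to the final configuration $k'=0$, $k=1$, $\hq\no\dq$), each refuted by its own computation; a one-sentence appeal to Corollary~\ref{cor:fw} bounding ``the deviation by less than $\abs{q}$'' does not substitute for this analysis. Finally, a smaller point: your main case still contains $y=z=\ew$ with $\by,\bz\neq\ew$ (your degeneracy assumption only excludes $y\by=\ew$ or $z\bz=\ew$), where no primitive root $p$ of $y,z$ exists and your commuting argument is vacuous; the paper dispatches exactly this case first, via Lemma~\ref{ssec:lcp-reg} applied to $xw(\by+\bz)^\ast\bx$.
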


\begin{proof}
The case $y=\ew=z$, i.e.\ $L = xw(\bar{y}+\bar{z})^\ast \bar{x}$ is already proven in Lemma~\ref{ssec:lcp-reg}.

Consider the case $yz\neq zy$. Let $\alpha$ be a witness (w.r.t.\ $xw\bx$). 
Assume $\alpha$ is of the following form for some suitable $j\ge 0$
\[
\alpha = xy^j y z \alpha' \bz\by\by^j \bx
\]
(The case $\alpha = xz^j zy\alpha'\by\bz\bz^j\bx$ is symmetrical.)
Then (swapping the inner most $y$ and $z$ still yields a word of $L$):
\[
\biglcp L = xw\bx \lcp \alpha = xw\bx \lcp xy^jyz\alpha' \bz\by\by^j \bx \lcp xy^jzy \alpha' \by\bz\by^j \bx = xw\bx \lcp xy^j(yz\lcp zy)
\]
Using Corollary~\ref{cor:fw}:
\[
xw\bx \lcp xy^j(yz\lcp zy) = xw\bx \lcp xy^j(y^\omega\lcp z^\omega)
\]
But obviously
\[
xyzw\bz\by\bx \lcp xzyw\by\bz\bx = x(yz\lcp zy)\noq xy^j(yz\lcp zy)
\]
So either $xyzw\bz\by\bx$ or $xzyw\by\bz\bx$ has to be a witness, too.

But again by virtue of Corollary~\ref{cor:fw} and for sufficiently large $j$
\[
xyzw\bz\by\bx \lcp xzyw\by\bz\bx = xy^\omega \lcp xz^\omega  = \px\py^jw \lcp \px\pz^jw
\]
Hence, we already find a witness within $\px[\py^\ast +\pz^\ast]w$ and, thus, within $\px[\py^{\le 2}+\pz^{\le 2}]w$

So assume for the following that $yz=zy$ with $y=p^k\wedge z=p^l$ and $p$ primitive (wlog.\ $k\ge l$). Wlog.\ $\max\{k,l\}>0$. (If $k=0$, then $y=\ew=z$ which we have already discussed.)

Let $w=p^mp'w'$ with $p=p'p''$ and $w\lcp p^\omega = w \lcp p^{m+1} = p^m p' \no p^{m+1}$.
Set $q=p''p'$ s.t.\ $pp'=p'q$ and $q\lcp w'=\ew$ and $p''\neq\ew$. As $p$ is primitive, so is $q$.

If $w\not\no p^\omega$, then $w'\neq \ew$. If $y\neq\ew$, then $xyw\by\bx$ is a witness; if $z\neq\ew$, then $xzw\bz\bx$ is a witness, too.

Hence, $w\no p^\omega$ in the following. Then $w'=\ew$ and $pw=wq$.
We factorize $\bx,\by,\bz$ w.r.t.\ $q$:
\begin{itemize}
\item
Let $\bx\lcp q^\omega = q^nq'\no q^{n+1}$ with $\bx= q^n q' \bx'$ and $q=q'q''$ and $q''\neq\ew$.
\item
Let $\bar{y}\lcp q^\omega = q^{k'}\hat{q}\no q^{k'+1}$ with $\bar{y}=q^{k'}\hat{q}\bar{y}'$ and $q=\hat{q}\hat{\hat{q}}$ and $\hat{\hat{q}}\neq\ew$.
\item
Let $\bar{z}\lcp q^\omega = q^{l'}\dot{q}\no q^{l'+1}$ with $\bar{z}=q^{l'}\dot{q}\bar{z}'$ and $q=\dot{q}\ddot{q}$ and $\ddot{q}\neq\ew$.
\end{itemize}
Thus:
\[
L = (xw,q^nq'\bx')[(q^k,q^{k'}\hq\by')+(q^l,q^{l'}\dq\bz')]^\ast \ew
\]
Wlog.\ $y = p^k \neq \ew$, i.e.\ $k>0$, and further $k\ge l$.

Let $\alpha$ be a witness w.r.t.\ $xw\bx$. We may distinguish the following cases for a witness $\alpha$ (with $\Gamma = (y,\by)+(z,\bz)$):
\begin{enumerate}
\item[(1)] $\alpha \in (x,\bx)(y,\by)^\ast w$
\item[(2)] $\alpha \in (x,\bx)(z,\bz)^\ast w$
\item[(3)] $\alpha \in (x,\bx)\Gamma^\ast (z,\bz)(y,\by)^+ w$
\item[(4)] $\alpha \in (x,\bx)\Gamma^\ast (y,\by)(z,\bz)^+ w$
\end{enumerate}
(The case $\alpha = xw\bx$ is covered by both (1) and (2).)

Cases (1) and (2) are both covered by Lemma~\ref{ssec:lcp-one-pt}.
Hence, we may assume in the following that any witness is of the form (3) or (4) --- otherwise we are done.

As $k>0$, we have
\[
\biglcp L \noq xw\bx\lcp (x,\bx)(y,\by)^{n+1}w = xw(\bx\lcp q^{k(n+1)}) = xw(\bx\lcp q^{n+1}) = xwq^n q'
\]
Hence 
$\biglcp L = xw\phi \text{ for some suitable } \phi \noq q^nq'$;
if $\phi = q^nq'$, then $(x,\bx)(y,\by)^{n+1}w$ would be a witness, contradicting our assumption that any witness of the form (3) or (4).

Hence, $\phi \no q^nq'$, i.e.\ $\biglcp L = xw\phi \no xwq^nq' = xw(\bx \lcp q^\omega)$; so, any witness $\alpha$ has to satisfy
\[
xw\phi = xw\bx\lcp \alpha = xwq^nq' \lcp \alpha = xwq^\omega \lcp \alpha
\]
as $\alpha$ has to differ from $xw\bx$ within the suffix $q^nq'$.

If $n < k$, then
\[
xw\bx \lcp \underbrace{x \ldots y \ldots w \ldots \by \ldots \bx}_{\in (x,\bx)\Gamma^\ast (y,\by)\Gamma^\ast w} = xw(\bx \lcp q^k) = xw(\bx \lcp q^{n+1}) = xwq^nq'
\]
So, either $xyw\by\bx$ is a witness, or $y$ cannot occur in any witness, implying that either $xzw\bz\bx$ or $xzzw\bz\bz\bx$ is a witness.

Hence, $n\ge k \ge l$ with $n>0$ as $k>0$.

We need to take a closer look at the structure of a respective $\alpha$. As case (4) is a special case of (3), we discuss (3) in detail and only remark where the proof differs from case (4).

So assume $\alpha \in (x,\bx)\Gamma^\ast (z,\bz)(y,\by)^+ w$. Then
	\[
	\alpha = xw q^{\lambda l + \mu k} q^{l} q^{jk} q^k q^{k'} \hq \by' (q^{k'}\hq\by')^j q^{l'}\dq\bz' \beta q^nq'\bx'
	\]
	where $\lambda$ ($\mu$) is the number of $z$ ($y$) right of $x$ and left of the inner most $z$; and $\beta$ is the corresponding string of $\by,\bz$, e.g.\ if $\lambda+\mu=0$, then $\beta=\ew$.
	
	As $\alpha \lcp xw\bx = xw\phi \no xwq^nq'$, we have 
	\[
	q^{k}q^{k'}\hq \noq q^l q^{k}q^{k'}\hq \noq q^{\lambda l + \mu k} q^{l} q^{jk} q^k q^{k'} \hq \noq \phi \no q^n q'
	\]
	If $\by'\neq \ew$, then
	\[
	xw\phi = \alpha \lcp xwq^\omega = xwq^{\lambda l + \mu k} q^{l} q^{jk} q^k q^{k'} \hq \no xw\bx \lcp xyw\by\bx = xw(q^nq'\lcp q^{k}q^{k'}\hq) = xwq^kq^{k'}\hq
	\]
	So $\by'=\ew$ in the following.
    
    We first do away with the case $\hq =\ew$:
    \begin{adjustwidth}{0.75cm}{0cm}
    If $\hq= \ew$ (i.e.\ $\by=q^{k'}$), then
	\[
	\alpha = xw q^{\lambda l + \mu k} q^{l} q^{jk} q^k q^{k'} (q^{k'})^j q^{l'}\dq\bz' \beta q^nq'\bx'
	\]
	Hence:
	\[
		q^l q^{l'}\dq \noq q^l q^{l'} \dq \noq q^{\lambda l + \mu k}q^{l} q^{jk} q^k q^{k'} (q^{k'})^j q^{l'}\dq \noq \phi \no q^nq'
	\]
	{\em Note}\,: As $k>0$, we trivially have $q q^l q^{l'} \dq \noq q^{\lambda l + \mu k} q^{l} q^{jk} q^k q^{k'} (q^{k'})^j q^{l'}\dq$ in contrast to case (4); but we always may assume in the case $\hq\by'=\ew$ that $k+k'>0$ as otherwise $(y,\by)=\ew$ which also holds analogously in case (4).

    Obviously $\dq\bz'\neq \ew$, as otherwise $xw\bx\lcp \alpha = xwq^nq'$.

	If $\bz'\neq \ew$, we obtain the contradiction $xw\bx \lcp \alpha \no xw\bx \lcp xzw\bz\bx = xwq^lq^{l'}\dq$ as $q^{l+l'}\dq \no q^nq'$; so $\dq\neq \ew = \bz'$ (i.e.\ $\bz=q^{l'}\dq$), and $\beta \in (q^\ast \dq)^\ast$.
	
	As $q$ primitive and $\ew\neq \dq \no q$, we have $\dq q\neq q \dq$ and thus $\dq^\omega\lcp q^\omega = \dq q \lcp q\dq \no q\dq$.

	Hence (using $n>0$ and $qq^{l+l'}\dq \no q^nq'$):
	\[
	xw\phi \no xw\bx \lcp xzw\bz\bx = xw(q^nq' \lcp q^{l+l'}\dq q) = xwq^{l+l'}(q\dq \lcp \dq q) \no xwq^{l+l'}q\dq \noq xw\phi
	\]
    \end{adjustwidth}    
    Thus also $\hq\neq \ew$ from here on.
	
	Again, as $q$ primitive and $\ew\neq \hq \no q$, we have $\hq q\neq q \hq$ and thus $\hq^\omega\lcp q^\omega = \hq q \lcp q\hq \no q\hq$.

	Hence (using $n>0$)
	\[
	xw\phi \no xw\bx \lcp xyw\by\bx = xw(q^nq' \lcp q^{k+k'}\hq q) \no xw qq^{k+k'}\hq
	\]
    If $\lambda l + \mu k+l+jk>0$, we obtain the contradiction
	\[
	qq^{k+k'}\hq \noq \phi \no qq^{k+k'}\hq
	\]
	analogously to the case $\hq\by'=\ew$.
	
    {\em Note}: In case (4) we are done at this point, as $k>0$ takes the place of $l\ge 0$ in case (4). 
    
	So $\lambda l + \mu k +l+jk=0$, i.e.\ $l=j=\mu = 0$ as $k>0$ for the following allowing us to write
	\[
	\alpha = xw q^k q^{k'} \hq q^{l'}\dq\bz' (q^{l'}\dq \bz')^\lambda q^nq'\bx'
	\]
    If $l'>0$, then (using $n>0$ and $\hq \no q$)
	\[
	xw\phi = \alpha \lcp xwq^\omega \no xyw\by\bx \lcp xwq^\omega = xw(q^{k+k'}\hq q\lcp q^\omega) = \alpha \lcp xwq^\omega
	\]
	So we have to have $l'=0$ which allows us to further simplify $\alpha$:
	\[
	\alpha = xw q^k q^{k'} \hq \dq\bz' (\dq \bz')^\lambda q^nq'\bx'
	\]
    If $\bz'\neq \ew$, then (using $n>0$ and $\dq \no q$)
	\[
	xw\phi \no xw\bx \lcp xzw\bz\bx = xw(q^nq'\lcp xw\dq\bz' q^nq') = xw\dq \no xwq \noq xwq^kq^{k'} \hq \noq xw\phi
	\]
	So $\bz'=\ew$ and thus $\dq \neq \ew$ (else $z=\ew$ and $\bz=\ew$):
	\[
	\alpha = xw q^k q^{k'} \hq \dq (\dq)^\lambda q^nq'\bx'
	\]
	Again we then have $\dq q \neq q\dq$, i.e.\ $q^\omega \lcp \dq^\omega = q\dq \lcp \dq q \no q\dq$.
	
	Hence (using $n>0$ and $\dq \no q$): $xw\bx \lcp xzw\bz\bx = xw(q^nq' \lcp \dq q^n q') \no xwq\dq$.
    
	We therefore have
    \[
    xw q^{k+k'}\hq \noq \alpha \lcp xwq^\omega = xw\phi \no xwq\dq
	\]
	i.e.\ $k'=0$, $k=1$, and $\hq \no \dq$ s.t.:
	\[
	\alpha = xw q\hq \dq (\dq)^\lambda q^nq'\bx'
	\]
	As $n>0$, $\hq\no\dq \no q$, $\phi \no q\dq$, we obtain the final contradiction:
	\[
	xw\phi=\alpha \lcp xw q^\omega \stackrel{\abs{\phi}<\abs{q\dq}\le \abs{q\hq\dq}}{=} xw(q\hq \dq \lcp q^\omega) \stackrel{\dq \no q,\abs{\phi}< \abs{q\dq}}{=} xw(q\hq q \lcp q^\omega) = xyw\by\bx \lcp xwq^\omega
	\]
\end{proof}

\begin{lemma}\label{ssec:lcp-mult-pt}
Let $L=(x,\bar{x})[\sum_{i=1}^n (y_i,\bar{y}_i)]^\ast w$. Then $\biglcp L = \biglcp (x,\bar{x})[\sum_{i=1}^n (y_i,\bar{y}_i)^{\le 2}] w$.
\end{lemma}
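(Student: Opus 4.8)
The plan is to reduce the statement, via the already-established one- and two-pointed cases (Lemmata~\ref{ssec:lcp-one-pt} and~\ref{ssec:lcp-two-pt}), to a single clean claim and then prove that claim by induction on the number $n$ of generators. Throughout, write $\sigma := xw\bx$ for the (shortest) word obtained from the empty product, $L_i := (x,\bx)(y_i,\by_i)^\ast w$, and $\pi_i := \biglcp L_i$. Since $G$ is proper, each $y_i\by_i\neq\ew$, so $\sigma$ is strictly shorter than every other word of $L$.

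First I would observe that the target right-hand side $L' := (x,\bx)[\sum_i (y_i,\by_i)^{\le 2}]w$ satisfies $\biglcp L' = \biglcp_{i}\pi_i$. Indeed, $L'$ is the union over $i$ of the three-word sets $\{\sigma, xy_iw\by_i\bx, xy_iy_iw\by_i\by_i\bx\}$, and by Lemma~\ref{ssec:lcp-one-pt} each such set has \txtLCP $\pi_i$; as $\biglcp$ distributes over unions this gives $\biglcp L' = \biglcp_i\pi_i = \biglcp\bigl(\bigcup_i L_i\bigr) =: \pi^{(1)}$, the \txtLCP of the sublanguage $L^{(1)} := (x,\bx)[\sum_i (y_i,\by_i)^\ast]w\subseteq L$ of words that use at most one type of generator. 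Because $L'\subseteq L$ forces $\biglcp L\noq\biglcp L'$, the whole lemma reduces to the single inequality $\biglcp L^{(1)}\noq\biglcp L$, i.e.\ to showing that $\pi^{(1)}$ is a prefix of every word of $L$, including the ``mixed'' ones. Equivalently, I must rule out any $\alpha\in L$ with $\sigma\lcp\alpha\no\pi^{(1)}$.

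Next I would run an induction on $n$, the cases $n\le 2$ being supplied by Lemmata~\ref{ssec:lcp-one-pt} and~\ref{ssec:lcp-two-pt}. Suppose, for contradiction, that some $\alpha\in L$ deviates early, $\sigma\lcp\alpha\no\pi^{(1)}$, and pick such an $\alpha$ of minimal length. If $\alpha$ used only a proper subset $S\subsetneq\{1,\dots,n\}$ of generators, then $\alpha,\sigma\in L_S := (x,\bx)[\sum_{i\in S}(y_i,\by_i)]^\ast w$, and the induction hypothesis yields $\biglcp L_S = \biglcp_{i\in S}\pi_i\noqr\pi^{(1)}$; hence $\sigma\lcp\alpha\noqr\biglcp L_S\noqr\pi^{(1)}$, contradicting $\sigma\lcp\alpha\no\pi^{(1)}$. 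Thus a minimal early-deviating witness must use all $n$ generators, and it remains to rule this out — the genuinely new work, which splits along commutation exactly as in the two-pointed proof.

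For the final step I would distinguish whether the $y_i$ all commute. If some pair satisfies $y_iy_j\neq y_jy_i$, I would localize the deviation by swapping the relevant adjacent non-commuting pair inside $\alpha$ and invoking Corollary~\ref{cor:fw}, just as in the non-commuting branch of Lemma~\ref{ssec:lcp-two-pt}; this shows $\sigma\lcp\alpha$ is already realized by a word using only $y_i,y_j$, so $\sigma\lcp\alpha\noqr\biglcp L_{ij}\noqr\pi^{(1)}$, a contradiction. Otherwise all used $y_i$ commute, hence $y_i=p^{k_i}$ for one common primitive $p$ by (iterated) Lemma~\ref{lem:com}; after disposing of the degenerate subcases ($w\not\no p^\omega$, or all $y_i=\ew$, where a single generator applied at most twice is already a witness by Lemmata~\ref{lem:w} and~\ref{ssec:lcp-reg}), I reduce to $w\no p^\omega$ with conjugate $pw=wq$, $q$ primitive, and factor $\bx$ and each $\by_i$ against $q^\omega$ as $\bx\lcp q^\omega = q^nq'$ and $\by_i\lcp q^\omega = q^{k'_i}\hat{q}_i$. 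A case analysis on these factorizations — entirely parallel to the commuting branch of Lemma~\ref{ssec:lcp-two-pt}, but now carried out with several tails $\by_i$ — shows that the earliest deviation from $\sigma$ is already attained by some single generator applied at most twice, contradicting that all $n$ generators are needed. I expect this last, all-commuting factorization bookkeeping to be the main obstacle: it is where the multiplicity of back-components $\by_i$ genuinely complicates the argument, whereas the first three steps are clean reductions to the $n\le 2$ lemmas.
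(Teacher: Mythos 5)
Your three reduction steps are sound: the right-hand side's \txtLCP indeed equals $\biglcp_i \pi_i$ by Lemma~\ref{ssec:lcp-one-pt} and distributivity of $\lcp$ over unions; the lemma is indeed equivalent to showing that this word is a prefix of every $\alpha\in L$; and your induction on $n$ correctly disposes of any early-deviating $\alpha$ that omits some generator. The gap sits exactly where you yourself flag ``the main obstacle'': the case where $\alpha$ uses all $n$ generators and all the $y_i$ commute is not proved but only asserted to be ``entirely parallel'' to the commuting branch of Lemma~\ref{ssec:lcp-two-pt}. That branch is the longest and most delicate argument in the paper, and its structure is tied to having exactly two tails: it enumerates which generator occurs innermost (cases (1)--(4)), keeps separate counts $\lambda,\mu,j$ of the two generators, and plays the two factorizations $\by=q^{k'}\hq\by'$ and $\bz=q^{l'}\dq\bz'$ against each other one subcase at a time. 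None of this transfers verbatim to $n$ tails $\by_1,\dots,\by_n$; a proof along your lines would have to redo and generalize all of it, and the sketch gives no indication of how the subcases would be organized. (A smaller soft spot: in the non-commuting case an ``adjacent non-commuting pair'' need not exist, because commutation is not transitive through generators with $y_i=\ew$ but $\by_i\neq\ew$; all adjacent pairs in $\alpha$ can commute while some non-adjacent pair does not. This is repairable --- the intervening front words are then forced to be empty --- but it is another place where the two-generator argument does not simply ``localize''.)

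The paper's own proof shows that this hard case can be avoided entirely. It picks a witness $\alpha$ whose factorization into generators has minimal length $k$ and derives a contradiction for $k\ge 3$ using only the already-proven lemmas: first, Lemma~\ref{ssec:lcp-two-pt} applied to the first two factors $(y,\by),(z,\bz)$ shows (by minimality of $k$) that $(x,\bx)(y,\by)^2w'$ must itself be a witness; then --- the decisive trick --- Lemma~\ref{ssec:lcp-two-pt} is applied a second time to the pair of \emph{compound} generators $(y,\by)^2$ and $(z',\bz')$, where $(z',\bz')$ is the third factor of $\alpha$, followed by Lemma~\ref{ssec:lcp-one-pt} to eliminate the term $(x,\bx)(y,\by)^4w''$. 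Every word on the resulting right-hand side has a factorization with fewer than $k$ factors, contradicting minimality. So the $n$-generator case needs no commutation analysis, no factorization w.r.t.\ $q^\omega$, and no induction on $n$ at all; it is pure bootstrapping from the cases $n\le 2$. To complete your argument, you should look for a grouping trick of this kind in your all-commuting case rather than attempt to redo Lemma~\ref{ssec:lcp-two-pt} with $n$ tails.
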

\begin{proof}
Let $\alpha\in L$ be a witness i.e.\ $\biglcp L =xw\bar{x}\lcp \alpha$.

Then $$\alpha = (x,\bar{x})\prod_{j=1}^k (y_{i_j},\bar{y}_{i_j}) w$$ for suitable $i_1,\ldots,i_k \in \{1,\ldots,n\}$.

If $k=1$, we are done.
Assume $k\ge 2$ for any witness (and any such factorization).
Pick a witness $\alpha$ and a factorization that minimizes $k$.
Set $$(y,\bar{y})=(y_{i_1},\bar{y}_{i_1}) \qquad (z,\bar{z})=(y_{i_2},\bar{y}_{i_2}) \qquad w' = \prod_{j=3}^k (y_{i_j},\bar{y}_{i_j}) w$$
Then using Lemma~\ref{ssec:lcp-two-pt}
\[
\biglcp L = xw\bar{x} \lcp \alpha = xw\bx\lcp  \biglcp \underbrace{(x,\bar{x})[ (y,\bar{y})+(z,\bar{z})]^\ast w'}_{\subseteq L\ni\alpha}
= xw\bar{x} \lcp \biglcp (x,\bar{x})[ (y,\bar{y})^{\le 2}+(z,\bar{z})^{\le 2}] w'
\]
So, one of the words on the right-hand side has to be a witness too. If $k=2$, we have $w'=w$, and we are also done. Hence assume $k\ge 3$ from now on.
Because of our assumption that $\alpha$ is a witness with a minimal factorization, only $(x,\bar{x})(y,\bar{y})^2 w'$ or $(x,\bar{x})(z,\bar{z})^2w'$ can be a witness.
Because of symmetry, it suffices to assume $\biglcp L = xw\bar{x}\lcp (x,\bar{x})(y,\bar{y})^2 w'$.
As $k\ge 3$ we have $w'=(y_{i_3},\bar{y}_{i_3})\prod_{j=4}^{k} (y_{i_j},\bar{y}_{i_j}) w$. Set $(z',\bar{z}')=(y_{i_3},\bar{y}_{i_3})$ and $w'' = \prod_{j=4}^{k} (y_{i_j},\bar{y}_{i_j}) w$ so that
\[
\biglcp L = xw\bar{x}\lcp (x,\bar{x})(y,\bar{y})^2 w' = xw\bar{x}\lcp (x,\bar{x})(y,\bar{y})^2 (z',\bar{z}')w''
= xw\bar{x} \lcp \biglcp \underbrace{(x,\bar{x})[(y,\bar{y})^2+(z',\bar{z}')]^\ast w''}_{\subseteq L} 
\]
Using again Lemma~\ref{ssec:lcp-two-pt} this is equivalent to
\[
\biglcp L =  xw\bar{x} \lcp xw''\bar{x}\lcp (x,\bar{x})(y,\bar{y})^2 w'' \lcp (x,\bar{x})(z',\bar{z}')w'' \lcp (x,\bar{x})(y,\bar{y})^4w'' \lcp (x,\bar{x})(z',\bar{z}')^2 w''
\]
As $(x,\bar{x})(y,\bar{y})^\ast w'\subseteq L$, adding $\biglcp (x,\bar{x})(y,\bar{y})^\ast w''$ cannot change the lcp 
\[
\biglcp L = xw\bar{x} \lcp xw''\bar{x}\lcp (x,\bar{x})(z',\bar{z}')w'' \lcp (x,\bar{x})(z',\bar{z}')^2 w'' \lcp \biglcp (x,\bar{x})(y,\bar{y})^\ast w''
\]
Using Lemma~\ref{ssec:lcp-one-pt} we finally obtain
\[
\biglcp L =  xw\bar{x} \lcp xw''\bar{x}\lcp (x,\bar{x})(z',\bar{z}')w'' \lcp  (x,\bar{x})(z',\bar{z}')^2 w'' \lcp \biglcp (x,\bar{x})(y,\bar{y})^{\le 2} w''
\]
Again, we have to find another witness within the words occurring on the right-hand side. But all these words have a factorization using less factors than $\alpha$ contradicting our choice of $\alpha$.
Hence, there has to be a witness having a factorization with $k\le 2$. Thus:
\[
\biglcp L = \biglcp_{i=1}^n (x,\bar{x})(y_i,\bar{y}_i)^{\le 2} w
\]
\end{proof}

\end{document}